\documentclass[a4paper,11pt]{amsart}

\usepackage{amssymb,amscd,amsmath}

\pagestyle{plain}

\theoremstyle{plain}

\newtheorem{thm}{Theorem}[section]
\newtheorem{prop}[thm]{Proposition}
\newtheorem{lem}[thm]{Lemma}

\newtheorem{obs}[thm]{Observation}

\theoremstyle{definition}

\theoremstyle{remark}

\newcommand{\forme}[1]{}


\topmargin -1cm

\title{On $k$-rainbow domination in middle graphs}

\author[Kim]{Kijung Kim}
\address{Department of Mathematics, Pusan National University, Busan 46241, Republic of Korea}
\email{knukkj@pusan.ac.kr}

\date{\today}
\subjclass[2010]{05C69}
\begin{document}

\begin{abstract}

Let $G$ be a finite simple graph with vertex set $V(G)$ and edge set $E(G)$.
A function $f : V(G) \rightarrow \mathcal{P}(\{1, 2, \dotsc, k\})$ is a \textit{$k$-rainbow dominating function} on $G$ if
for each vertex $v \in V(G)$ for which $f(v)= \emptyset$, it holds that $\bigcup_{u \in N(v)}f(u) = \{1, 2, \dotsc, k\}$.
The weight of a $k$-rainbow dominating function is the value $\sum_{v \in V(G)}|f(v)|$.
The \textit{$k$-rainbow domination number} $\gamma_{rk}(G)$ is the minimum weight of a $k$-rainbow dominating function on $G$.
In this paper, we initiate the study of $k$-rainbow domination numbers in middle graphs.
We define the concept of a middle $k$-rainbow dominating function, obtain some bounds related to it
and determine the middle $3$-rainbow domination number of some classes of graphs.
We also provide upper and lower bounds for the middle $3$-rainbow domination number of trees in terms of the matching number.
In addition, we determine the $3$-rainbow domatic number for the middle graph of paths and cycles.

\bigskip

\noindent
{\footnotesize \textit{Key words:} $k$-rainbow domination number, middle graph, middle $k$-rainbow domination number, matching number, $k$-rainbow domatic number }
\end{abstract}

\maketitle

\insert\footins{\footnotesize
This research was supported by Basic Science Research Program through the National Research Foundation of Korea funded by the Ministry of Education (2020R1I1A1A01055403).}

\section{Introduction}\label{sec:intro}

Let $G=(V,E)$ be a connected undirected graph with the vertex set $V=V(G)$ and edge set $E=E(G)$.
The \textit{order} of $G$ is defined as the cardinality of $V$.
The \textit{open neighborhood} of $v \in V(G)$ is the set $N(v) = \{ u \in V(G) \mid uv \in E(G)\}$ and
the \textit{closed neighborhood} of $v \in V(G)$ is the set $N[v]:= N(v) \cup \{v\}$.
The \textit{degree} of $v \in V(G)$ is defined as the cardinality of $N(v)$, denoted by $deg_G(v)$.
When no confusion arises, we may delete the subscript $G$ in $deg_G(v)$.
The \textit{maximum degree} and \textit{minimum degree} of $G$ are denoted by $\Delta(G)$ and $\delta(G)$, respectively
We write $P_n$, $C_n$ and $K_n$ for a path, a cycle and a complete graph, respectively.

In \cite{HY}, Hamada and Yoshimura defined the middle graph of a graph.
The \textit{middle graph} $M(G)$ of a graph $G$ is the graph obtained by subdividing each edge of $G$ exactly once
and joining all these newly introduced vertices of adjacent edges of $G$.
The precise definition of $M(G)$ is as follows.
The vertex set $V(M(G))$ is $V(G) \cup E(G)$.
Two vertices $v, w \in V(M(G))$ are adjacent in $M(G)$ if
(i) $v, w \in E(G)$ and $v, w$ are adjacent in $G$ or
(ii) $v \in V(G)$, $w \in E(G)$ and $v, w$ are incident in $G$.

In the graph domination, a set of vertices is selected as guards such that each vertex not selected has a guard as a neighbor.
As a generalization of the graph domination, Bre$\acute{s}$ar et al. introduced the concept of rainbow domination in \cite{BHR}.
In the $k$-rainbow domination, $k$-different types of guards are required in the neighborhood of a non-selected vertex.
Let $[k]$ be the set of positive integers at most $k$.
A function $f : V(G) \rightarrow \mathcal{P}([k])$ is a \textit{$k$-rainbow dominating function} on $G$ if
for each vertex $v \in V(G)$ for which $f(v)= \emptyset$, it holds that $\bigcup_{u \in N(v)}f(u) = [k]$.
The weight of a $k$-rainbow dominating function is the value $\sum_{v \in V(G)}|f(v)|$.
The \textit{$k$-rainbow domination number} $\gamma_{rk}(G)$ is the minimum weight of a $k$-rainbow dominating function on $G$.
In \cite{CWZ}, Chang et al. proved that the $k$-rainbow domination is NP-complete.
So, it is worthwhile to determine the $k$-rainbow domination numbers of some classes of graphs.
Indeed, there are many papers on the $2$-rainbow domination.
The latest survey of the $2$-rainbow domination is introduced in \cite{B}.
For $k \geq 3$, it is more difficult to determine the $k$-rainbow domination number of a graph.
The following are a few results on the $3$-rainbow domination number.
In \cite{SLYXPZ}, Shao et al. determined the $3$-rainbow domination numbers of paths, cycles and generalized Petersen graphs $P(n,1)$.
In \cite{WWDAKL}, Wang et al. determined the $3$-rainbow domination number of $P_3 \square P_n$.
In \cite{GXY}, Gao et al. determined the $3$-rainbow domination numbers of $C_3 \square C_m$ and $C_4 \square C_m$.
In \cite{CK}, Cynthia et al. determined the $3$-rainbow domination number of circulant graph $G(n; \pm\{1,2,3\})$.
In \cite{FKY}, Furuya et al. proved that for every connected graph $G$ of order $n \geq 8$ with $\delta(G) \geq 2$,
$\gamma_{r3}(G) \leq \frac{5n}{6}$.

To study the $k$-rainbow domination number in the class of middle graphs, we define the following concept.
For $v \in V(G)$, we denote $\{ e \in E(G) \mid e ~\text{is incident with}~ v\}$ by $N_M(v)$.
For $e \in E(G)$, we denote $\{ x \in V(G) \cup E(G) \mid x ~\text{is either adjacent or incident with}~ e\}$ by $N_M(e)$.
We write $N_M[x] = N_M(x) \cup \{x\}$.
A \textit{middle $k$-rainbow dominating function} (MkRDF) on a graph $G$ is a function $f: V \cup E \rightarrow \mathcal{P}([k])$ such that
every element $x \in V \cup E$ for which $f(x)=\emptyset$ satisfies $\bigcup_{v \in N_M(x)}f(v) = [k]$.
A middle $k$-rainbow dominating function $f$ gives an ordered partition $(V_0 \cup E_0, V_1 \cup E_1, V_2 \cup E_2, \dotsc, V_k \cup E_k)$,
where $V_i := \{ x \in V \mid |f(x)|=i \}$ and  $E_i := \{ x \in  E \mid |f(x)|=i \}$.
The \textit{weight} of a middle $k$-rainbow dominating function $f$ is $\omega(f) :=\sum_{x \in V \cup E} |f(x)|$.
The \textit{middle $k$-rainbow domination number} $\gamma_{rk}^\star(G)$ of $G$ is the minimum weight of a middle $k$-rainbow dominating function of $G$.
A \textit{$\gamma_{rk}^\star(G)$-function} is a MkRDF on $G$ with weight $\gamma_{rk}^\star(G)$.
We remark that $\gamma_{rk}^\star(G) = \gamma_{rk}(M(G))$ for any graph $G$.
In \cite{Kim}, it was considered only $2$-rainbow domination numbers of the middle graphs.
In this paper, we initiate the study of the middle $k$-rainbow domination in graphs.
In particular, we determine the exact value of middle $3$-rainbow domination numbers of some classes of graphs.
A \textit{matching} in a graph $G$ is a set of pairwise nonadjacent edges.
The maximum number of edges in a matching of a graph $G$ is called the matching number of $G$ and denoted by $\alpha'(G)$.
We provide upper and lower bounds for the middle $3$-rainbow domination number of trees in terms of the matching number.
A set $\{f_1, \dotsc, f_d\}$ of $k$-rainbow dominating functions of $G$ is called a \textit{$k$-rainbow dominating family} on $G$
if $\sum_{i=1}^d |f_i(v)| \leq k$ for each $v \in V(G)$.
The maximum number of functions in a $k$-rainbow dominating family on $G$ is the \textit{$k$-rainbow domatic number} of $G$, denoted by
$d_{rk}(G)$.
It is known that $k$-rainbow domatic number is well-defined and $d_{rk}(G) \geq k$ for every graph $G$ (See \cite{SV}).
We determine the $3$-rainbow domatic number for the middle graph of paths and cycles.

The rest of this section, we present some necessary terminology and notation.
For terminology and notation on graph theory not given here, the reader is referred to \cite{BM}.
Let $T$ be a (rooted) tree.
A \textit{leaf} of $T$ is a vertex of degree one.
A \textit{pendant edge} is an edge incident with a leaf.
A \textit{support vertex} is a vertex adjacent to a leaf.
For a vertex $v$, $C(v)$ denote the set of the children of $v$.
$D[v]$ denote the set of the descendants and $v$.
The subtree induced by $D[v]$ is denoted by $T_v$.
We write $K_{1,n-1}$ for the \textit{star} of order $n \geq 3$.
The \textit{double star} $DS_{p,q}$, where $p, q \geq 1$, is the graph obtained
by joining the centers of two stars $K_{1,p}$ and $K_{1,q}$.
A \textit{healthy spider} $S_{t,t}$ is the graph from a star $K_{1,t}$ by subdividing each edges of $K_{1,t}$.
A \textit{wounded spider} $S_{t,r}$ is the graph from a star $K_{1,t}$ by subdividing $r$ edges of $K_{1,t}$,
where $r \leq t-1$.
Note that a star $K_{1,t}$ is a wounded spider $S_{t,0}$.
For a graph $G$ and its subset $S$, $G - S$ denotes the subgraph of $G$ induced by $V(G) \setminus V(S)$.
A \textit{diametral path} of $G$ is a path with the length which equals the diameter of $G$.
The \textit{complement} of $G=(V,E)$ is the graph $(V, \overline{E})$, which is denoted by $\overline{G}$,
where $uv \in \overline{E}$ if and only if $uv \not\in E$.

\section{General bounds of the middle $k$-rainbow domination number}\label{sec:basic}

In this section, we obtain general bounds of the middle $k$-rainbow domination number.
First, we begin by giving a simple lower bound on the middle $k$-rainbow domination number.

\begin{obs}\label{prop:lower-bound}
If $G$ is a graph with $|V(G)| + |E(G)| \geq k$,
then $\gamma_{rk}^\star(G) \geq k$.
\end{obs}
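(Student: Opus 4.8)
The plan is to consider any middle $k$-rainbow dominating function $f$ on $G$ and show that its weight $\omega(f) = \sum_{x \in V \cup E} |f(x)|$ is at least $k$. The key observation is that the empty set cannot rainbow-dominate on its own: if $f(x) = \emptyset$ for \emph{every} $x \in V \cup E$, then for any element $x$ with $f(x) = \emptyset$ (which is all of them, and by hypothesis $V \cup E \neq \emptyset$ since $|V(G)| + |E(G)| \geq k \geq 1$) we would need $\bigcup_{y \in N_M(x)} f(y) = [k]$, yet this union is a union of empty sets, hence empty, contradicting $[k] \neq \emptyset$.

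Therefore at least one element $x_0 \in V \cup E$ has $f(x_0) \neq \emptyset$. Now I would split into two cases depending on how much weight is concentrated. If some single element $x$ already has $|f(x)| \geq k$, then $\omega(f) \geq |f(x)| \geq k$ and we are done. Otherwise, every element has $|f(x)| \leq k - 1$; in particular the element $x_0$ above has $1 \leq |f(x_0)| \leq k-1$, so there is a color $c \in [k] \setminus f(x_0)$. Consider a neighbor structure: actually a cleaner route is to pick \emph{any} element $x$ with $f(x) \neq [k]$ — such an $x$ exists since not every element carries all $k$ colors (if some did we'd already be in the first case). Hmm, let me streamline.

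Cleaner argument: since $V \cup E \neq \emptyset$, fix any $x \in V \cup E$. If $f(x) = [k]$ then $\omega(f) \geq |f(x)| = k$ and we are finished. Otherwise $f(x) \neq [k]$. If moreover $f(x) = \emptyset$, then the rainbow condition forces $\bigcup_{y \in N_M(x)} f(y) = [k]$, so $\omega(f) \geq \sum_{y \in N_M(x)} |f(y)| \geq |\bigcup_{y \in N_M(x)} f(y)| = k$. The remaining possibility is $\emptyset \neq f(x) \subsetneq [k]$ for this particular $x$; but then pick instead a color $c \notin f(x)$ — actually the simplest is: in this last case just note $\omega(f) \geq |f(x)| \geq 1$, which is not enough by itself, so I should not fix $x$ arbitrarily but argue globally. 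So the real argument is the dichotomy: either every element of $V \cup E$ receives a nonempty subset — then $\omega(f) \geq |V \cup E| \geq k$ by the hypothesis $|V(G)| + |E(G)| \geq k$ — or some element $x$ has $f(x) = \emptyset$, in which case the rainbow condition gives $\omega(f) \geq \sum_{y \in N_M(x)} |f(y)| \geq |\bigcup_{y \in N_M(x)} f(y)| = |[k]| = k$.

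There is no real obstacle here; the only point requiring a moment's care is the case split and the use of the hypothesis $|V(G)| + |E(G)| \geq k$, which is exactly what makes the "all nonempty" branch work (each such element contributes at least $1$ to the weight, and there are at least $k$ of them). The inequality $\sum_{y \in N_M(x)} |f(y)| \geq |\bigcup_{y \in N_M(x)} f(y)|$ is just subadditivity of cardinality under unions, applied to a finite family of sets. Assembling these two branches yields $\gamma_{rk}^\star(G) \geq k$ since $f$ was an arbitrary MkRDF.
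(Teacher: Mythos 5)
Your final dichotomy --- either some $x \in V(G) \cup E(G)$ has $f(x) = \emptyset$, forcing $\bigcup_{y \in N_M(x)} f(y) = [k]$ and hence weight at least $k$, or every element has nonempty label, so the weight is at least $|V(G)| + |E(G)| \geq k$ --- is exactly the paper's proof, and it is correct. The earlier meandering in your write-up is harmless; the streamlined case split at the end is all that is needed.
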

\begin{proof}
Let $f$ be a $\gamma_{rk}^\star(G)$-function.
If $f(x) = \emptyset$ for some $x \in V(G) \cup E(G)$, then clearly $\bigcup_{y \in N_M(x)}f(y) = \{ 1, \dotsc, k\}$.
If $f(x) \neq \emptyset$ for all $x \in V(G) \cup E(G)$, then it follows from $|V(G)| + |E(G)| \geq k$ that $\gamma_{rk}^\star(G) \geq k$.
\end{proof}

\begin{prop}\label{prop:3rain}
Let $G$ be a graph of order $n \geq 2$.
Then $\gamma_{r3}^\star(G) = 3$ if and only if $G \in \{\overline{K_3}, P_2\}$.
\end{prop}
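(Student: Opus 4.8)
The plan is to prove the two implications separately, using the identity $\gamma_{r3}^\star(G) = \gamma_{r3}(M(G))$ throughout and writing $n = |V(G)|$ and $m = |E(G)|$. For the ``if'' direction I would just exhibit optimal functions: when $G = \overline{K_3}$ the middle graph is again three isolated vertices, so every middle $3$-rainbow dominating function must give each of them a nonempty set and hence has weight at least $3$, a value attained by the singleton assignment; when $G = P_2$ the middle graph is $P_3$ with the edge of $G$ as its centre, Observation~\ref{prop:lower-bound} gives the lower bound $3$, and putting $\{1,2,3\}$ on the centre and $\emptyset$ on the two leaves attains it.

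For the converse I would first dispose of small cases: if $n + m \le 3$ then, since $n \ge 2$, $G$ is one of $\overline{K_2}$, $\overline{K_3}$, $P_2$, and because $\gamma_{r3}^\star(\overline{K_2}) = 2 \ne 3$ the hypothesis already forces $G \in \{\overline{K_3}, P_2\}$. The real content is then to show that $n + m \ge 4$ forces $\gamma_{r3}^\star(G) \ge 4$. I would argue by contradiction, assuming a middle $3$-rainbow dominating function $f$ of weight exactly $3$ and setting $S = \{x \in V \cup E : f(x) \ne \emptyset\}$, so that $1 \le |S| \le 3$; since $n + m \ge 4$, some element $x$ has $f(x) = \emptyset$. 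The pivotal step, which I would isolate as a small claim, is that every empty $x$ is adjacent in $M(G)$ to \emph{every} member of $S$: if some $z \in S$ were not in $N_M(x)$, then $\bigcup_{y \in N_M(x)} f(y)$ would be contained in $\bigcup_{y \in S \setminus \{z\}} f(y)$, a set of size at most $3 - |f(z)| \le 2$, contradicting $\bigcup_{y \in N_M(x)} f(y) = [3]$.

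Granting that claim, I would finish by splitting on whether $S$ contains a vertex of $G$. If some $z \in S$ is a vertex, then, since in $M(G)$ a vertex of $G$ is adjacent only to edges of $G$, every empty element must be an edge; hence no vertex of $G$ is empty, so $V(G) \subseteq S$ and $n \le |S| \le 3$. The subcase $n = 2$ is impossible, since then $n + m \ge 4$ would require two edges on two vertices; and the subcase $n = 3$ forces $S = V(G)$ with each vertex carrying a single colour, while $n + m \ge 4$ produces an (empty) edge of $G$ whose $M(G)$-neighbourhood meets $S$ only in its two endpoints and thus cannot collect all three colours --- a contradiction. If instead $S \subseteq E(G)$, then every vertex of $G$ is empty and so incident with every edge of $S$; picking any $e = uv \in S$ gives $V(G) \subseteq \{u, v\}$, whence $n \le 2$, then $m \le 1$ and $n + m \le 3$, contradicting $n + m \ge 4$. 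This exhausts all cases.

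I expect the main obstacle to be keeping the final case analysis organized rather than any single inequality. The subcase $n = 3$, where $S$ is exactly the three vertices each carrying one colour, is the one where the middle-graph structure has to be used carefully: the key point is that an empty edge of $G$ ``sees'' only its two endpoints among the nonempty elements, hence at most two colours. Routing everything through the uniform claim that every empty element is $M(G)$-adjacent to all of $S$ is what keeps the bookkeeping manageable, since it reduces the remaining work to reading off incidences and adjacencies in $G$.
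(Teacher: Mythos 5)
Your proof is correct, and it is organized differently from the paper's. The paper first splits on whether some element receives the full set $[3]$ (forcing that element to be an edge and $G=P_2$), and otherwise argues that no vertex of $G$ can be empty --- because an empty vertex would force all the weight onto its incident edges, leaving their far endpoints undominated --- so that $V(G)\subseteq S$, $n\le 3$, and a short check finishes. You instead first dispose of the graphs with $|V|+|E|\le 3$ by inspection, and then show that $|V|+|E|\ge 4$ rules out weight $3$ via the uniform claim that every element with empty label must be $M(G)$-adjacent to \emph{every} member of the support $S$ (since deleting any one support element leaves at most two colours available). That claim is not in the paper and is the main structural difference: it lets you split cleanly on whether $S$ meets $V(G)$ or lies entirely in $E(G)$, and it replaces the paper's somewhat terse ``the far endpoints of $e_1,e_2$ are not dominated'' step with a single inequality $\bigl|\bigcup_{y\in S\setminus\{z\}}f(y)\bigr|\le 3-|f(z)|\le 2$. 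The paper's route is marginally shorter; yours is more systematic and makes the exhaustiveness of the case analysis easier to audit. Both correctly land on $\{\overline{K_3},P_2\}$.
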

\begin{proof}
If $G \in \{\overline{K_3}, P_2\}$, then clearly $\gamma_{r3}^\star(G) = 3$.
Conversely, assume that $\gamma_{r3}^\star(G) = 3$ and let $f$ be a $\gamma_{r3}^\star(G)$-function.
If there exists $x \in V(G) \cup E(G)$ such that $f(x)=[3]$, then $x \in E(G)$ for otherwise $x$ can not dominate the other vertices.
Thus, $G = P_2$.

Now assume that there is no element with weight $3$.
If $f(v)= \emptyset$ for some $v \in V(G)$, then there exist at least two edges $e_1, e_2$ incident to $v$ such that $f(e_1)$ and $f(e_2)$ are not empty. But, end vertices of $e_1, e_2$ except for $v$ are not dominated, a contradiction.
Thus, every vertex in $V(G)$ has non-zero weight so that there are at most three vertices in the graph $G$.
One can easily check that $G = \overline{K_3}$ or $P_2$.
\end{proof}


\begin{thm}\label{thm:del-vertex}
If $G$ is a graph and $v \in V(G)$, then
$\gamma_{rk}^\star(G) - \text{min}\{\Delta(G) +1, k\} \leq \gamma_{rk}^\star(G - v) \leq \gamma_{rk}^\star(G)$.
\end{thm}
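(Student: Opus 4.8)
The plan is first to identify $M(G-v)$ inside $M(G)$. Since $V(G-v)=V(G)\setminus\{v\}$ and $E(G-v)=E(G)\setminus N_M(v)$, the definition of the middle graph shows at once that $M(G-v)$ is exactly the subgraph of $M(G)$ induced by $\bigl(V(G)\cup E(G)\bigr)\setminus\bigl(\{v\}\cup N_M(v)\bigr)$; thus passing from $G$ to $G-v$ deletes from $M(G)$ the vertex $v$ together with the $\deg_G(v)$ vertices of $M(G)$ that correspond to edges of $G$ incident with $v$. For $x\in V(G-v)\cup E(G-v)$, its neighbourhood in $M(G-v)$ is its neighbourhood in $M(G)$ with the members of $N_M(v)$ that were neighbours removed; note that $v$ itself is a neighbour of no surviving element, since in $M(G)$ the vertex $v$ is adjacent only to the edges of $N_M(v)$.

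For the upper bound $\gamma_{rk}^\star(G-v)\le\gamma_{rk}^\star(G)$, I would take a $\gamma_{rk}^\star(G)$-function $f$ and turn it into a middle $k$-rainbow dominating function on $G-v$ of no larger weight. Merely restricting $f$ can fail, because an element of $G-v$ with empty value may have been rainbow-dominated with the help of colours lying on a deleted edge $vu$. The remedy is to push those colours onto the surviving endpoint: writing $N_M(v)=\{vu_1,\dots,vu_d\}$ (the $u_i$ distinct, as $G$ is simple), set $g(u_i)=f(u_i)\cup f(vu_i)$ for $1\le i\le d$ and $g(x)=f(x)$ for every other $x\in V(G-v)\cup E(G-v)$. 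Then $f(x)\subseteq g(x)$ for all $x$ and $\omega(g)\le\omega(f)-|f(v)|\le\gamma_{rk}^\star(G)$. To check that $g$ is a middle $k$-rainbow dominating function on $G-v$, consider $x$ with $g(x)=\emptyset$; then $f(x)=\emptyset$, so in $M(G)$ we have $\bigcup_{y\in N_M(x)}f(y)=[k]$, and the neighbours of $x$ lost in $M(G-v)$ lie among $vu_1,\dots,vu_d$. If $x=u_i$, then $g(u_i)=\emptyset$ forces $f(u_i)=f(vu_i)=\emptyset$, so the deleted edge carried no colour and nothing essential was lost; if $x$ is an edge of $G-v$, then each deleted neighbour of $x$ is an edge $vu_i$ whose endpoint $u_i$ is also an endpoint of $x$, hence $u_i$ remains a neighbour of $x$ in $M(G-v)$ with $g(u_i)\supseteq f(vu_i)$. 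In either case the colours on the surviving neighbours of $x$ still cover $[k]$, which completes the upper bound.

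For the remaining inequality, equivalent to $\gamma_{rk}^\star(G)\le\gamma_{rk}^\star(G-v)+\min\{\Delta(G)+1,k\}$, I would take a $\gamma_{rk}^\star(G-v)$-function $h$ and extend it over the deleted vertices $v,vu_1,\dots,vu_d$ of $M(G)$. Enlarging an assignment cannot violate the rainbow condition at an element already present in $G-v$: its neighbourhood in $M(G)$ only grows and its own value is unchanged. Hence it suffices to make $v$ and each $vu_i$ either nonempty or dominated, and there are two cheap ways to do so: (A) put $g(v)=[k]$, costing $k$ and leaving every $vu_i$ dominated by $v$; or (B) put $g(v)=\{1\}$ and $g(vu_i)=\{1\}$ for all $i$, costing $\deg_G(v)+1\le\Delta(G)+1$ and making $v,vu_1,\dots,vu_d$ all nonempty. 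Using whichever is cheaper yields a middle $k$-rainbow dominating function on $G$ of weight at most $\gamma_{rk}^\star(G-v)+\min\{\Delta(G)+1,k\}$.

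The one genuinely delicate point is the upper bound: the naive restriction of an optimal function need not be a middle $k$-rainbow dominating function, so one must construct the colour-transfer $g$ and then verify, splitting on whether the problematic empty element is a vertex $u_i$ or an edge incident with some $u_i$, that no surviving element loses its domination. The lower-bound inequality is routine once one notes that adding colours is harmless and that $v$ together with its incident edges can always be fixed at cost $\min\{\Delta(G)+1,k\}$.
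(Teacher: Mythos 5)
Your proof is correct and follows essentially the same route as the paper: for the upper bound you push the colours of each deleted edge $vu$ onto the surviving endpoint $u$ (exactly the paper's $h(u)=f(u)\cup f(uv)$), and for the lower bound you repair $v$ and its incident edges either by $g(v)=[k]$ or by assigning $\{1\}$ throughout $N_M[v]$, matching the paper's two cases according to whether $k\le\Delta(G)+1$. Your verification of why the colour transfer preserves domination is more detailed than the paper's ``clearly,'' but the argument is the same.
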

\begin{proof}
First, we claim that $\gamma_{rk}^\star(G) - \text{min}\{\Delta(G) +1, k\} \leq \gamma_{rk}^\star(G - v)$.
Let $f$ be a $\gamma_{rk}^\star(G - v)$-function.
If $k \leq \Delta(G) +1$, then define
$g : V(G) \cup E(G) \rightarrow \mathcal{P}([k])$ by $g(v)=[k]$, $g(x)=\emptyset$ for $x \in N_M(v)$ and $g(x)=f(x)$ otherwise.
If $k > \Delta(G) +1$, then define
$g : V(G) \cup E(G) \rightarrow \mathcal{P}([k])$ by $g(x) =\{1\}$ for each $x \in N_M[v]$ and $g(x)=f(x)$ otherwise.
Clearly, $g$ is a MkRDF of $G$ with weight at most $\gamma_{rk}^\star(G - v)$ + min$\{\Delta(G) +1, k\}$.
Thus, $\gamma_{rk}^\star(G) - \text{min}\{\Delta(G) +1, k\} \leq \gamma_{rk}^\star(G - v)$.

Next, we claim that $\gamma_{rk}^\star(G - v) \leq \gamma_{rk}^\star(G)$.
Let $f$ be a $\gamma_{rk}^\star(G)$-function.
Define $h : V(G-v) \cup E(G-v) \rightarrow \mathcal{P}([k])$ by $h(u) = f(u) \cup f(uv)$ for $u \in N(v)$ and $h(x) =f(x)$ otherwise.
Then clearly $h$ is a MkRDF of $G - v$ with weight $\gamma_{rk}^\star(G)$.
\end{proof}

\begin{thm}\label{thm:add-edge12}
Let $G$ be a graph.
Then
\begin{enumerate}
\item $\gamma_{rk}^\star(G) -k  \leq \gamma_{rk}^\star(G + e) \leq \gamma_{rk}^\star(G) +1$ for $e \in E(\overline{G})$,
\item $\gamma_{rk}^\star(G) -1  \leq \gamma_{rk}^\star(G - e) \leq \gamma_{rk}^\star(G) +k$ for $e \in E(G)$.
\end{enumerate}
\end{thm}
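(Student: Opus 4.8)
The plan is to establish part (1) directly and then deduce part (2) by applying (1) to $G-e$: since $e\in E(\overline{G-e})$ and $(G-e)+e=G$, the two inequalities of (1) for the pair $(G-e,e)$ read $\gamma_{rk}^\star(G-e)-k\le\gamma_{rk}^\star(G)\le\gamma_{rk}^\star(G-e)+1$, which rearrange to exactly the two inequalities of (2). So the whole work is in (1), and there fix a graph $G$ and $e=xy\in E(\overline{G})$.

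For the upper bound $\gamma_{rk}^\star(G+e)\le\gamma_{rk}^\star(G)+1$, I would take a $\gamma_{rk}^\star(G)$-function $f$ and note that $M(G)$ is precisely the subgraph of $M(G+e)$ induced by $V(G)\cup E(G)$: the only extra vertex of $M(G+e)$ is $e$ itself, and every element of $V(G)\cup E(G)$ keeps all of its $M(G)$-neighbours in $M(G+e)$, gaining at most $e$ as a new one. Extending $f$ by $f(e)=\{1\}$ therefore yields an MkRDF of $G+e$: the new element $e$ has non-empty label, and each old element with empty label still sees $[k]$ in its (possibly enlarged) middle-neighbourhood. Its weight is $\gamma_{rk}^\star(G)+1$.

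For the lower bound I would prove the equivalent inequality $\gamma_{rk}^\star(G)\le\gamma_{rk}^\star(G+e)+k$. Let $f$ be a $\gamma_{rk}^\star(G+e)$-function. Deleting the vertex $e$ from $M(G+e)$ gives exactly $M(G)$, so I restrict $f$ to $V(G)\cup E(G)$ and repair it. The structural observation is that in $M(G+e)$ the vertex $e$ is adjacent precisely to $x$, to $y$, and to every edge of $G$ incident with $x$ or with $y$; consequently these are the only elements whose emptiness condition can be violated after the restriction, and for each such element $w$ with $f(w)=\emptyset$ the colours of $[k]$ missing from the union of the $f$-values over the $M(G)$-neighbourhood of $w$ all lie inside $f(e)$. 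Since every edge incident with $x$ is adjacent to $x$ in $M(G)$, and likewise for $y$, I set $g(x)=f(x)\cup f(e)$, $g(y)=f(y)\cup f(e)$, and $g=f$ elsewhere; then $g(x)$ (resp.\ $g(y)$) supplies the missing colours at every broken element adjacent to $x$ (resp.\ $y$), while $g\ge f$ keeps the condition at all remaining elements, so $g$ is an MkRDF of $G$. Its weight is $\omega(f)-|f(e)|+|f(e)\setminus f(x)|+|f(e)\setminus f(y)|\le\omega(f)+|f(e)|\le\gamma_{rk}^\star(G+e)+k$.

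The genuinely delicate point is this last estimate: adding a copy of $f(e)$ to both $x$ and $y$ appears to cost $2|f(e)|$, but deleting the vertex $e$ refunds $|f(e)|$, so the net increase is at most $|f(e)|\le k$ — this is why the constant is $k$ and not $2k$. On the correctness side one must verify the emptiness condition separately for the elements far from $\{x,y\}$, for $x$ and $y$ themselves (the only subtle subcase being $f(x)=f(e)=\emptyset$), and for the edges at $x$ or $y$; each is routine once one records that before the deletion every affected element already saw all of $[k]\setminus f(e)$ among its surviving middle-neighbours.
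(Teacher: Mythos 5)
Your proof is correct and takes essentially the same route as the paper: the upper bound in (1) by extending a minimum function with $f(e)=\{1\}$, the lower bound by folding $f(e)$ into both endpoints $x$ and $y$ (the paper splits this into the cases $f(e)=\emptyset$ and $f(e)\neq\emptyset$, but your unified treatment with the weight estimate $\omega(g)\le\omega(f)+|f(e)|\le\gamma_{rk}^\star(G+e)+k$ covers both), and (2) deduced by applying (1) to $G-e$. Your verification of the repaired function is, if anything, more careful than the paper's.
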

\begin{proof}
(i) First, we show that $\gamma_{rk}^\star(G + e) \leq \gamma_{rk}^\star(G) +1$.
Let $f$ be $\gamma_{rk}^\star(G)$-function.
Clearly, we can extend $f$ to a MkRDF of $G +e$ by assigning $\{1\}$ to $f(e)$.

Next, we claim that $\gamma_{rk}^\star(G) -k  \leq \gamma_{rk}^\star(G + e)$.
Let $f$ be $\gamma_{rk}^\star(G +e)$-function and let $e=uv$.
If $f(e) =\emptyset$, then clearly the function $f|_{V(G)\cup E(G)}$ is a MkRDF of $G$.
This implies $\gamma_{rk}^\star(G) -k  < \gamma_{rk}^\star(G) \leq \gamma_{rk}^\star(G + e)$.

Now assume that $f(e) \neq \emptyset$.
Then define $g : V(G) \cup E(G) \rightarrow \mathcal{P}([k])$ by
$g(u) =f(u) \cup f(e)$, $g(v) = f(v) \cup f(e)$ and $g(x) =f(x)$ otherwise.
Clearly, $g$ ia a MkRDF of $G$ with weight $\gamma_{rk}^\star(G + e) + |f(e)|$.
Thus, $\gamma_{rk}^\star(G) -k \leq \gamma_{rk}^\star(G) -|f(e)| \leq \gamma_{rk}^\star(G + e)$.

\vskip5pt
(ii) By (i), $\gamma_{rk}^\star(G -e) -k \leq \gamma_{rk}^\star((G - e) +e) \leq \gamma_{rk}^\star(G -e) +1$.
This implies (ii).
\end{proof}


\begin{thm}\label{thm:tree-order}
Let $T$ be a tree of order $n\geq 3$.
Then $\gamma_{r3}^\star(T) \leq \frac{3n}{2}$.
\end{thm}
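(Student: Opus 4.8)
The plan is to build an explicit middle $3$-rainbow dominating function on $T$ directly from a matching. Let $M$ be a maximum matching of $T$, so $|M|=\alpha'(T)$, and let $U$ be the set of vertices of $T$ not saturated by $M$; since $T$ is connected with $n\ge 3$ it has an edge, so $M\neq\emptyset$. Define $f:V(T)\cup E(T)\to\mathcal{P}([3])$ by $f(e)=[3]$ for each $e\in M$, $f(w)=\{1\}$ for each $w\in U$, and $f(x)=\emptyset$ for all remaining elements of $V(T)\cup E(T)$.

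The first step is to check that $f$ is an M3RDF, i.e.\ that every $x$ with $f(x)=\emptyset$ satisfies $\bigcup_{y\in N_M(x)}f(y)=[3]$. Such an $x$ is either a vertex saturated by $M$ or an edge not belonging to $M$. If $x=v$ is a saturated vertex and $e$ is the edge of $M$ at $v$, then $e\in N_M(v)$ and $f(e)=[3]$, so the union equals $[3]$. If $x=uv$ is an edge outside $M$, then maximality of $M$ forbids both $u$ and $v$ from lying in $U$; say $u$ is saturated, and let $e_u\in M$ be the edge at $u$. Then $e_u\neq uv$ and $e_u$ is incident with $u$, hence $e_u\in N_M(uv)$ and $f(e_u)=[3]$, so again the union equals $[3]$.

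The second step is the weight count. Since $|U|=n-2|M|$, we get $\omega(f)=3|M|+|U|=n+|M|=n+\alpha'(T)$. A matching in a graph on $n$ vertices has at most $\lfloor n/2\rfloor$ edges, so $\omega(f)\le n+\tfrac{n}{2}=\tfrac{3n}{2}$, and therefore $\gamma_{r3}^\star(T)\le\tfrac{3n}{2}$.

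I do not expect a genuine obstacle here. The two small points that need care are: an edge lying outside a maximal matching cannot have both endpoints unsaturated (this is exactly what makes the verification for non-matching edges work), and $N_M(\cdot)$ has two meanings, one for original vertices and one for edge-vertices, so the verification must invoke the correct one in each case. I would also note that the same construction yields $\gamma_{r3}^\star(G)\le n+\alpha'(G)$ for every graph $G$ without isolated vertices, so the hypothesis that $T$ is a tree enters only through the bound $\alpha'(T)\le n/2$.
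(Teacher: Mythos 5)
Your proof is correct, but it takes a genuinely different route from the paper. The paper proves this theorem by induction on $n$: it reduces to the cases of stars and double stars, then picks a diametral path $x_0x_1\dotsc x_d$, splits into three cases according to $\deg(x_{d-1})$ and $\deg(x_{d-2})$, deletes the subtree $T_{x_{d-1}}$ or $T_{x_{d-2}}$, and extends an optimal function on the smaller tree. You instead give a one-shot construction from a maximum matching $M$: assign $[3]$ to the edges of $M$, $\{1\}$ to the $M$-unsaturated vertices, and $\emptyset$ elsewhere, then verify the rainbow condition (saturated vertices see their matching edge; a non-matching edge has, by maximality of $M$, a saturated endpoint and hence an adjacent matching edge) and conclude $\gamma_{r3}^\star(T)\le n+\alpha'(T)\le \tfrac{3n}{2}$. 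This is exactly the construction the paper uses later to prove Theorem~\ref{thm:simple-upper}, so in effect you derive Theorem~\ref{thm:tree-order} as a corollary of that stronger bound together with $\alpha'(T)\le \lfloor n/2\rfloor$; your observation that the argument works for any graph without isolated vertices is also accurate. Your approach is shorter, avoids the induction and the case analysis entirely, and yields the sharper intermediate bound $n+\alpha'(T)$; the paper's inductive argument does not buy anything extra here beyond exhibiting strict inequality outside the base cases, so your route is arguably the cleaner one given that Theorem~\ref{thm:simple-upper} is proved in the paper anyway.
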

\begin{proof}
We proceed by induction on the order $n$ of $T$.
Obviously, the statement is true for a path $P_3$.

Let $T$ be a tree of order $n \geq 4$.
Suppose that every tree $T'$ of order $n' (< n)$ satisfies $\gamma_{r3}^\star(T') \leq \frac{3n'}{2}$.
Let $f$ be a $\gamma_{r3}^\star(T')$-function.
If $T$ is a star $K_{1,n-1}$, then $\gamma_{r3}^\star(T)=n+1 \leq \frac{3n}{2}$.
Assume that $T$ is a double star $DS_{p,q}$ with $p \geq q \geq 1$.
Then $p+q+3 = n+1 =\gamma_{r3}^\star(T) \leq \frac{3n}{2}$.
Now we assume that $T$ is neither a star or a double star.
Then it is easy to see that $T$ has diameter at least four.
Among all of diametrical paths in $T$, we choose $x_0x_1\dotsc x_d$ so that it maximizes the degree of $x_{d-1}$.
Root $T$ at $x_0$.
We divide our consideration into three cases.

\vskip5pt
\textbf{Case 1.} $deg_T(x_{d-1}) = t \geq 3$.

Now $T_{x_{d-1}} \cong K_{1, t-1}$.
Let $T' = T - T_{x_{d-1}}$.
Applying the induction hypothesis to $T'$, we have $\gamma_{r3}^\star(T') \leq \frac{3(n-t)}{2}$.
Define $g : V(T) \cup E(T) \rightarrow \mathcal{P}([3])$ by $g(x_{d-1})=g(x_{d-2}x_{d-1})=g(x_d)=\emptyset$,
$g(x_{d-1}x_d)=[3]$, $g(x_{d-1}x)=\emptyset$ and $g(x)=\{1\}$ for $x \in C(x_{d-1}) \setminus \{x_d\}$ and
$g(x)=f(x)$ otherwise.
Then clearly $g$ is a M3RDF of $T$ and so
\[\gamma_{r3}^\star(T) \leq \omega(g)= \gamma_{r3}^\star(T') + t+ 1 \leq \frac{3(n-t)}{2} + t+ 1 = \frac{3n-t+2}{2} < \frac{3n}{2}.\]

\vskip5pt
\textbf{Case 2.} $deg_T(x_{d-1}) = 2$ and $deg_T(x_{d-2}) = t \geq 3$.

Now $T_{x_{d-2}} \cong S_{t-1, r}$, where $1 \leq r \leq t-1$.
Let $T' = T - T_{x_{d-2}}$.
Applying the induction hypothesis to $T'$, we have $\gamma_{r3}^\star(T') \leq \frac{3(n-t-r)}{2}$.
Define $g : V(T) \cup E(T) \rightarrow \mathcal{P}([3])$ by $g(x_{d-3}x_{d-2})=g(x_{d-2})= g(x_{d-1})=g(x_{d-1}x_d)=\emptyset$,
$g(x_{d-2}x_{d-1})=[3]$,
$g(x_{d-2}x)=\emptyset$ for $x \in C(x_{d-2}) \setminus \{x_{d-1}\}$,
$g(x)=\{1\}$ if $x \in C(x_{d-2}) \setminus \{x_{d-1}\}$ is a leaf,
$g(x)=\emptyset$, $g(xy)=[3]$ and $g(y)=\emptyset$ if $x \in C(x_{d-2}) \setminus \{x_{d-1}\}$ is a support vertex, where
$y \in C(x)$,
and $g(x)=f(x)$ otherwise.
Then clearly $g$ is a M3RDF of $T$ and so
\[\gamma_{r3}^\star(T) \leq \omega(g)= \gamma_{r3}^\star(T') + t+ 2r \leq \frac{3(n-t-r)}{2} + t+ 2r = \frac{3n-t+r}{2} < \frac{3n}{2}.\]

\vskip5pt
\textbf{Case 3.} $deg_T(x_{d-1}) = 2$ and $deg_T(x_{d-2}) =2$.

Now $T_{x_{d-2}} \cong P_3$.
Let $T' = T - T_{x_{d-2}}$.
Applying the induction hypothesis to $T'$, we have $\gamma_{r3}^\star(T') \leq \frac{3(n-3)}{2}$.
Define $g : V(T) \cup E(T) \rightarrow \mathcal{P}([3])$ by $g(x_{d-3}x_{d-2})= g(x_{d-2}) =g(x_{d-1}) =g(x_{d-1}x_d)=\emptyset$,
$g(x_{d-2}x_{d-1})=[3]$, $g(x_d)=\{1\}$ and $g(x)=f(x)$ otherwise.
Then clearly $g$ is a M3RDF of $T$ and so
\[\gamma_{r3}^\star(T) \leq \omega(g)= \gamma_{r3}^\star(T') + 4 \leq \frac{3(n-3)}{2} + 4 = \frac{3n-1}{2} < \frac{3n}{2}.\]
\end{proof}

\section{The middle $3$-rainbow domination number of paths, cycles and complete graphs}\label{sec:main1}

In this section, we determine the middle $3$-rainbow domination number of paths, cycles and complete graphs.

\begin{prop}\label{prop:path}
For $n \geq 2$,
\begin{equation*}
\gamma_{r3}^\star(P_n) = \left\{
                      \begin{array}{ll}
                     \frac{4n-1}{3} & \hbox{for $n \equiv 1$ (mod $3$);} \\
                     \frac{4n+1}{3} & \hbox{for $n \equiv 2$ (mod $3$);} \\
                     \frac{4n}{3} & \hbox{for $n \equiv 0$ (mod $3$).}
                      \end{array}
                     \right.
\end{equation*}
\end{prop}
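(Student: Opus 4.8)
The plan is to prove the formula for $\gamma_{r3}^\star(P_n) = \gamma_{r3}(M(P_n))$ by establishing matching upper and lower bounds. Label the vertices of $P_n$ as $v_1, v_2, \dotsc, v_n$ and the edges as $e_i = v_iv_{i+1}$ for $1 \le i \le n-1$; then $M(P_n)$ is itself a path on $2n-1$ vertices, namely $v_1, e_1, v_2, e_2, \dotsc, e_{n-1}, v_n$. So in principle one could invoke the known value of $\gamma_{r3}(P_{2n-1})$ from \cite{SLYXPZ}, but since the stated answer is given directly in terms of $n$ I would instead give a self-contained argument, because the bookkeeping is cleaner and it sets up the pattern reused later for cycles.

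For the upper bound I would exhibit an explicit M3RDF with a period-$3$ block structure on the vertex/edge sequence. On a consecutive group of three original vertices $v_{3j+1}, v_{3j+2}, v_{3j+3}$ together with the interleaving edges, assign (reading along the path $\dots, v_{3j+1}, e_{3j+1}, v_{3j+2}, e_{3j+2}, v_{3j+3}, e_{3j+3}, \dots$) a pattern such as $\emptyset, \{1\}, \{2,3\}, \emptyset, \emptyset, \{1\}$ — i.e. concentrate a full $[3]$ near the middle of each block and spread a singleton to bridge blocks — so that each block of $4$ or $5$ consecutive path-elements costs weight $4$. One checks locally that every element receiving $\emptyset$ sees all of $[3]$ in its $M$-neighborhood. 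Summing over $\lfloor (2n-1)/\text{period}\rfloor$ blocks and then handling the $O(1)$ leftover elements at the two ends according to $n \bmod 3$ gives weights $\tfrac{4n-1}{3}$, $\tfrac{4n+1}{3}$, $\tfrac{4n}{3}$ in the three residue classes; I would present the three end-adjustments as three short explicit functions rather than one uniform formula.

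For the lower bound I would use a discharging / amortization argument along the path $M(P_n)$. Partition the $2n-1$ elements of $V(M(P_n))$ into consecutive windows and show that within each window of length $3$ (in the $M$-path) the total weight assigned by any M3RDF is at least $2$, and moreover that long stretches cannot all be "cheap": the key local fact is that a $\emptyset$-element forces weight $3$ among its two $M$-neighbors, and a $[3]$-labeled element can only be an original edge $e_i$ (a vertex labeled $[3]$ would leave its non-incident-edge... actually in the path case even a vertex of degree $2$ labeled $[3]$ wastes weight), which limits how the full set $[3]$ can propagate. Concretely, I would show that in any $M3RDF$ of $P_n$, any three consecutive original edges $e_{i-1}, v_i, e_i, v_{i+1}, e_{i+1}$ contribute total weight at least $4$, via a short case check on whether $f(v_i), f(v_{i+1})$ are empty; summing these overlapping or tiling inequalities and optimizing the endpoint contribution yields the claimed lower bounds, matching the construction.

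The main obstacle I expect is the lower bound, specifically making the window/tiling argument tight enough to get the exact constants rather than just the asymptotic $\tfrac{4n}{3}$: the three residue classes differ by the fate of the two end-elements $v_1$ and $v_n$ (each of degree $1$ in $P_n$, hence of degree $1$ in $M(P_n)$), and one must argue carefully that these forced boundary costs cannot be amortized away. I would handle this by choosing the tiling of $M(P_n)$ into blocks so that the two end-blocks are slightly shorter, proving a stronger inequality ($\ge 3$ rather than $\ge 2$, say) for those end-blocks using the degree-one structure, and then checking that the three cases $n \equiv 0, 1, 2 \pmod 3$ distribute the $2n-1$ elements into tilings with the predicted total. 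A sanity check against small cases ($n = 2$: $M(P_2) = P_3$, $\gamma_{r3}(P_3) = 3 = \tfrac{4\cdot 2 + 1}{3}$; $n = 3$: $M(P_3) = P_5$, value $\tfrac{4 \cdot 3}{3} = 4$; $n = 4$: value $5$) confirms the formula and would be used to pin down the endpoint adjustments.
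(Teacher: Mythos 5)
There is a genuine structural error at the base of your plan: $M(P_n)$ is \emph{not} the path $P_{2n-1}$. By the definition of the middle graph, the newly introduced vertices of \emph{adjacent} edges are joined, so in $M(P_n)$ the edge-vertices $e_i=v_iv_{i+1}$ and $e_{i+1}=v_{i+1}v_{i+2}$ are adjacent; $M(P_n)$ is a chain of triangles $\{e_i,v_{i+1},e_{i+1}\}$ with pendant vertices $v_1,v_n$ (only the subdivision graph is a path). This matters quantitatively: $\gamma_{r3}(P_{2n-1})\approx \tfrac{3}{2}n$, which exceeds the claimed $\tfrac{4n}{3}$, so the shortcut via the known path values of \cite{SLYXPZ} gives the wrong answer (check $n=4$: $\gamma_{r3}(P_7)=6\neq 5$), and in fact no assignment of total weight $\approx\tfrac{4n}{3}$ can be a valid $3$-rainbow dominating function if the underlying graph really were $P_{2n-1}$. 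The extra adjacencies $e_i\sim e_{i+1}$ are exactly what the optimal functions exploit: in the paper's construction ($f(v_{3i+1})=\{1\}$, $f(v_{3i+2}v_{3i+3})=[3]$, all else empty) the element $e_{3i+3}$ is rainbow-dominated only because it is adjacent to the edge-vertex $e_{3i+2}$ carrying $[3]$. Your sample block pattern $\emptyset,\{1\},\{2,3\},\emptyset,\emptyset,\{1\}$ is not a valid M3RDF even with the extra edges (e.g.\ the element $e_{3j+2}$ sees only $\{2,3\}$, and $v_{3j+1}$ sees only $\{1\}$), so the upper bound is not established as written.

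The same misconception undermines the lower bound. Your key local fact, ``a $\emptyset$-element forces weight $3$ among its two $M$-neighbors,'' is false for edge-elements, which have up to four $M$-neighbors ($v_i$, $v_{i+1}$, $e_{i-1}$, $e_{i+1}$); and the concrete window inequality you propose, that $e_{i-1},v_i,e_i,v_{i+1},e_{i+1}$ always carry total weight at least $4$, is violated by the optimal function itself (such a window can carry exactly $3$, namely $\emptyset,\emptyset,[3],\emptyset,\emptyset$), so summing these inequalities cannot yield the stated constants. The paper instead proves the lower bound by induction on $n$: writing the elements as $x_1,\dots,x_{2n-1}$, it shows $\sum_{i=1}^{5}|f(x_i)|\geq 3$, and that equality forces $f(x_6)=[3]$ and hence $\sum_{i=1}^{7}|f(x_i)|\geq 6$; deleting the first four (resp.\ three) original vertices and transferring $f(x_8)$ (resp.\ $f(x_6)$) onto the next vertex yields an M3RDF of $P_{n-4}$ (resp.\ $P_{n-3}$), and the induction hypothesis closes the count in each residue class. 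If you want to salvage your window approach, you must first redo all local analysis with the correct neighborhoods of $M(P_n)$ (triangles plus two pendant vertices), use windows aligned with blocks of six consecutive elements where a weight-$4$ bound is actually provable, and re-derive the boundary adjustments; as it stands the argument does not go through.
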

\begin{proof}
One can check that $\gamma_{r3}^\star(P_2)=3$, $\gamma_{r3}^\star(P_3)=4$, $\gamma_{r3}^\star(P_4)=5$,
$\gamma_{r3}^\star(P_5)= 7$ and $\gamma_{r3}^\star(P_6)= 8$.
We proceed by induction on $n$.
Assume that $P_n = v_1v_2 \dotsc v_n$ and let $x_{2i-1} =v_i$ for $1 \leq i \leq n$ and $x_{2i} =v_iv_{i+1}$ for $1 \leq i \leq n-1$.
Let $V(P_n) \cup E(P_n) = \{x_1, \dotsc, x_{2n-1}\}$ and $f$ be a $\gamma_{r3}^\star(P_n)$-function.
We divide our consideration into three cases.

\vskip5pt
\textbf{Case 1.} $n \equiv 1$ (mod $3$).

First, we claim that $\gamma_{r3}^\star(P_n) \geq \frac{4n-1}{3}$.
Assume that $n \geq 7$.
It is easy to see that $\sum_{i=1}^5|f(x_{i})| \geq 3$.
If $\sum_{i=1}^5|f(x_{i})| = 3$, then $|f(x_{4})| + |f(x_{5})|=0$.
To dominate $x_5$, it must be $f(x_6) =\{1,2,3\}$.
Thus, it is easy to see that $\sum_{i=1}^{7}|f(x_{i})| \geq 6$.
Define $h : V(P_n - \{v_1, \dotsc, v_4\}) \cup E(P_n - \{v_1, \dotsc, v_4\}) \rightarrow \mathcal{P}([3])$ by $h(x_{9}) = f(x_{8}) \cup f(x_{9})$ and $h(x_i) =f(x_i)$ for $10 \leq i \leq 2n-1$.
Clearly, $h$ is a M3KDF of $P_{n-4}$ with weight at most $\omega(f)-6$.
By the induction hypothesis, we have
\[\gamma_{r3}^\star(P_n) \geq \omega(h) +6 \geq  \gamma_{r3}^\star(P_{n-4}) +6 = \frac{4(n-4)}{3} +6 = \frac{4n+2}{3}.\]

Now assume that $\sum_{i=1}^5|f(x_{i})| \geq 4$.
Define $g : V(P_n - \{v_1, v_2, v_3\}) \cup E(P_n - \{v_1, v_2, v_3\}) \rightarrow \mathcal{P}([3])$ by $g(x_7) = f(x_6) \cup f(x_7)$
and $g(x_i) =f(x_i)$ for $8 \leq i \leq 2n-1$.
Clearly, $g$ is a M3KDF of $P_{n-3}$ with weight at most $\omega(f)-4$.
By the induction hypothesis, we have
\[\gamma_{r3}^\star(P_n) \geq \omega(g) +4 \geq \gamma_{r3}^\star(P_{n-3}) +4 =\frac{4(n-3)-1}{3} +4 = \frac{4n-1}{3}.\]

Next, we claim that $\gamma_{r3}^\star(P_n) \leq \frac{4n-1}{3}$.
Define $h : V(P_n) \cup E(P_n) \rightarrow \mathcal{P}([3])$ by $h(v_{1+3i})= \{1\}$ for $0 \leq i \leq \frac{n-1}{3}$,
$h(x_{4+6i})=[3]$ for $0 \leq i \leq \frac{n-4}{3}$ and $h(x)= \emptyset$ otherwise.
It is easy to see that $h$ is a M3RDF of $P_n$ with weight $\frac{4n-1}{3}$.
Thus, we have $\gamma_{r3}^\star(P_n) = \frac{4n-1}{3}$.

\vskip5pt
\textbf{Case 2.} $n \equiv 2$ (mod $3$).

By the same argument as in Case 1, we can show that $\gamma_{r3}^\star(P_n) \geq \frac{4n+1}{3}$.
Define $g : V(P_n) \cup E(P_n) \rightarrow \mathcal{P}([3])$ by $g(v_{1+3i})= \{1\}$ for $0 \leq i \leq \frac{n-2}{3}$,
$g(x_{4+6i})=[3]$ for $0 \leq i \leq \frac{n-5}{3}$, $g(v_n)=\{2,3\}$ and $g(x)= \emptyset$ otherwise.
It is easy to see that $g$ is a M3RDF of $P_n$ with weight $\frac{4n+1}{3}$.
Thus, we have $\gamma_{r3}^\star(P_n) = \frac{4n+1}{3}$.

\vskip5pt
\textbf{Case 3.} $n \equiv 0$ (mod $3$).

By the same argument as in Case 1, we can show that $\gamma_{r3}^\star(P_n) \geq \frac{4n}{3}$.
Define $h : V(P_n) \cup E(P_n) \rightarrow \mathcal{P}([3])$ by $h(v_{1+3i})= \{1\}$ for $0 \leq i \leq \frac{n-3}{3}$,
$h(x_{4+6i})=[3]$ for $0 \leq i \leq \frac{n-3}{3}$ and $h(x)= \emptyset$ otherwise.
It is easy to see that $h$ is a M3RDF of $P_n$ with weight $\frac{4n}{3}$.
Thus, we have $\gamma_{r3}^\star(P_n) = \frac{4n}{3}$.
\end{proof}

\begin{prop}\label{prop:cycle}
For $n \geq 3$,
\begin{equation*}
\gamma_{r3}^\star(C_n) = \left\{
                      \begin{array}{ll}
                     \frac{4n+2}{3} & \hbox{for $n \equiv 1$ (mod $3$);} \\
                     \frac{4n+1}{3} & \hbox{for $n \equiv 2$ (mod $3$);} \\
                     \frac{4n}{3} & \hbox{for $n \equiv 0$ (mod $3$).}
                      \end{array}
                     \right.
\end{equation*}
\end{prop}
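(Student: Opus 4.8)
The plan is to follow the proof of Proposition~\ref{prop:path} closely: prove the lower and the upper bound separately, and in each of them treat the three residue classes of $n$ modulo $3$ in parallel. Throughout I write $V(C_n)\cup E(C_n)=\{v_1,\dots ,v_n\}\cup\{e_1,\dots ,e_n\}$ with $e_i=v_iv_{i+1}$ and all indices read modulo $n$, so that in $M(C_n)$ one has $N_M(v_i)=\{e_{i-1},e_i\}$ and $N_M(e_i)=\{v_i,v_{i+1},e_{i-1},e_{i+1}\}$.

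For the upper bounds I would write down an explicit MkRDF. The basic block sets $f(v_{3i+1})=\{1\}$, $f(e_{3i+2})=[3]$ and $f(x)=\emptyset$ otherwise; a direct check (each empty $v$ sees the adjacent $[3]$-edge, and each empty $e$ sees a $[3]$-edge among its four neighbours) shows this is an M3RDF, and when $n\equiv 0\pmod 3$ its weight is $\frac{n}{3}+3\cdot\frac{n}{3}=\frac{4n}{3}$. For $n\equiv 2\pmod 3$ I would run the same pattern on $v_1,\dots ,v_{3m-2}$ (where $n=3m+2$) and repair the seam by raising one more edge to $[3]$, obtaining weight $\frac{4n+1}{3}$. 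For $n\equiv 1\pmod 3$ I would either patch the seam with one vertex of weight $2$ and one incident edge of weight $1$, or — more cheaply — note that $C_n=P_n+v_1v_n$ and combine Theorem~\ref{thm:add-edge12}(i) with Proposition~\ref{prop:path} to get $\gamma_{r3}^\star(C_n)\le \gamma_{r3}^\star(P_n)+1=\frac{4n-1}{3}+1=\frac{4n+2}{3}$. Each of these verifications is routine.

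For the lower bounds I would induct on $n$ with step $n\to n-3$, the base cases ($n=3,4,5$, and a few more to be safe) being handled by direct inspection, using Observation~\ref{prop:lower-bound} and Proposition~\ref{prop:3rain} to rule out small weights. For the inductive step, take a $\gamma_{r3}^\star(C_n)$-function $f$; deleting three consecutive vertices $v_j,v_{j+1},v_{j+2}$ from $C_n$ leaves exactly $P_{n-3}$, and transferring $f$ to $M(P_{n-3})$ by keeping every surviving value and absorbing $f(e_{j-1}),f(e_{j+2})$ into the two new endpoints $v_{j-1},v_{j+3}$ yields a valid M3RDF of $P_{n-3}$ of weight at most $\omega(f)-S_j$, where $S_j:=|f(v_j)|+|f(e_j)|+|f(v_{j+1})|+|f(e_{j+1})|+|f(v_{j+2})|$. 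Hence $\gamma_{r3}^\star(C_n)\ge \gamma_{r3}^\star(P_{n-3})+\max_j S_j$, and by Proposition~\ref{prop:path} it remains to show that $\max_j S_j\ge 4$ (and $\ge 5$ when $n\equiv 1\pmod 3$). If this fails, i.e. $S_j\le 3$ for all $j$, I would show $f$ is already expensive: if some element of $M(C_n)$ carries $[3]$, the two windows straddling it force another edge two steps away to carry $[3]$, and iterating leaves an arithmetic progression of step $2$ of $[3]$-edges; if no element carries $[3]$, then an empty vertex forces its two incident edges to have total weight $3$ and (again via $S_j\le 3$) forces its neighbouring vertices empty, so either all $v_i$ are empty or none are, and in both cases $\omega(f)\ge \frac{3n}{2}$, which already beats the target for $n\ge 6$.

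The hard part, as I see it, is precisely this case analysis, and within it the class $n\equiv 1\pmod 3$, where the easy averaging only gives $\max_j S_j\ge 4$ whereas the reduction to $P_{n-3}$ needs $5$. There I expect to have to either sharpen the window argument (showing the transfer to $M(P_{n-3})$ in fact removes strictly more than $S_j$ in the critical configurations) or switch to the reduction $C_n\to C_{n-3}$ through a virtual edge $e'=v_{j-1}v_{j+3}$ and prove that the weight deleted in the excised segment exceeds the weight re-injected on $e'$ by at least $4$. Because $M(C_n)$ has no endpoints to anchor a ``first few elements'' argument and its edge-vertices have degree $4$, this bookkeeping — together with the parity split between odd and even $n$ — will be heavier than in Case~1 of the proof of Proposition~\ref{prop:path}, and I expect pinning down the exact constant in the $n\equiv 1\pmod 3$ case to be where most of the effort goes.
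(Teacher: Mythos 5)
Your upper-bound constructions, and the use of Theorem~\ref{thm:add-edge12}(i) together with Proposition~\ref{prop:path} for $n\equiv 1\pmod 3$, are fine and essentially what the paper does. The lower bound, however, has a genuine gap exactly where you suspect it, and it is not merely a matter of effort: the key inequality you would need, $\max_j S_j\ge 5$ when $n\equiv 1\pmod 3$, is simply false. Take $C_7$ and the optimal function $f(v_1)=f(v_4)=f(v_7)=\{1\}$, $f(v_2v_3)=f(v_5v_6)=[3]$, $f(v_7v_1)=\{1\}$, of weight $10=\frac{4\cdot 7+2}{3}$; every window of five consecutive elements has weight at most $4$. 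So the reduction to $P_{n-3}$ can only ever yield $\frac{4n-1}{3}$ in this residue class, one short of the truth, and the ``sharpening'' you mention (showing the transfer loses strictly more than $S_j$, e.g.\ because $f(e_{j-1})$ is absorbed into a vertex already containing those colours, which is indeed what happens in the $C_7$ example) would itself require a full case analysis that you have not supplied. Your structural claim that $S_j\le 3$ for all $j$ forces $\omega(f)\ge \frac{3n}{2}$ also needs repair in the subcase where no vertex is empty: there the correct conclusion is a contradiction (all vertex labels would have size $1$ and all edges would be empty, so no edge could be rainbow-dominated), not a weight bound.

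The paper sidesteps all of this by choosing shorter windows whose arithmetic closes exactly. It first tests whether some three-element window $|f(v_k)|+|f(v_kv_{k+1})|+|f(v_{k+1})|$ has weight at least $3$, in which case deleting $\{v_k,v_{k+1}\}$ gives a M3RDF of $P_{n-2}$ of weight at most $\omega(f)-3$, and $\gamma_{r3}^\star(P_{n-2})+3=\frac{4n+2}{3}$ on the nose for $n\equiv 1\pmod 3$; failing that, it tests whether some $|f(v_k)|\ge 2$ and deletes a single vertex to reach $P_{n-1}$; and only in the residual situation (every such window of weight at most $2$ and every vertex label of size at most $1$) does it run a direct counting-and-modification argument on $f$ itself, ending at the bound $\frac{3n}{2}\ge\frac{4n+2}{3}$. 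If you want to keep your $P_{n-3}$ scheme you must either prove the strengthened loss estimate for the transfer in the $n\equiv 1$ case or replace it by one of these shorter-window reductions; as written, the lower bound for $n\equiv 1\pmod 3$ is not established.
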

\begin{proof}
Assume that $C_n = v_1v_2 \dotsc v_nv_1$,
where the subscript $k$ of $v_k$ is read by modulo $n$.
Let $f$ be a $\gamma_{r3}^\star(P_n)$-function such that
the size of $N:= \{ v_i \mid f(v_i) \neq \emptyset\}$ is as small as possible.
We divide our consideration into three cases.

\vskip5pt
\textbf{Case 1.} $n \equiv 1$ (mod $3$).

If there exists some $k$ such that $|f(v_{k})| + |f(v_{k}v_{k+1})| + |f(v_{k+1})| \geq 3$, then
define $g : V(P_n - \{v_k, v_{k+1}\}) \cup E(P_n - \{v_k, v_{k+1}\}) \rightarrow \mathcal{P}([3])$ by
$g(v_{k-1})= f(v_{k-1}) \cup f(v_{k-1}v_k)$, $g(v_{k+2}) = f(v_{k+1}v_{k+2}) \cup f(v_{k+2})$ and
$g(x)=f(x)$ otherwise.
It is easy to see that $g$ is a M3RDF of $P_{n-2}$ with weight at most $\gamma_{r3}^\star(C_n) -3$.
By Proposition \ref{prop:path}, we have
\[\gamma_{r3}^\star(C_n) \geq \omega(g) + 3 \geq \gamma_{r3}^\star(P_{n-2}) +3 = \frac{4(n-2)+1}{3} +3 = \frac{4n+2}{3}.\]



Assume that
\begin{equation}\label{A1}
|f(v_{k})| + |f(v_{k}v_{k+1})| + |f(v_{k+1})| \leq 2
\end{equation}
for each $k \in [n]$.

If there exists some $k \in [n]$ such that $|f(v_k)| \geq 2$,
then define $g : V(P_n - v_k) \cup E(P_n - v_k) \rightarrow \mathcal{P}([3])$ by $g(v_{k-1}) = f(v_{k-1}) \cup  f(v_{k-1}v_k)$,
$g(v_{k+1}) = f(v_{k+1}) \cup  f(v_kv_{k+1})$ and $g(x)=f(x)$ otherwise.
Clearly, $g$ is a M3RDF of $P_{n-1}$ with weight at most $\gamma_{r3}^\star(C_n) -2$.
By Proposition \ref{prop:path}, we have
\[\gamma_{r3}^\star(C_n) \geq \omega(g) + 2 \geq \gamma_{r3}^\star(P_{n-1}) +2 \geq \frac{4(n-1)}{3} +2 = \frac{4n+2}{3}.\]

Assume that $|f(v_k)| \leq 1$ for each $k \in [n]$.
If $|f(v_k)| = 0$ for each $k \in [n]$, then to dominate $v_k$, we must have $f(v_{k-1}v_k) \cup f(v_{k}v_{k+1}) =[3]$.
Thus, it follows from $n \geq 4$ that
\[\gamma_{r3}^\star(C_n) = \frac{1}{2}\sum_{1 \leq k \leq n} \sum_{x \in N_M(v_k)}|f(x)| = \frac{3n}{2}  \geq \frac{4n+2}{3}.\]

Now assume that $N$ is not empty.
For a fixed $v_i \in N$, if $|f(v_{i})| + |f(v_{i}v_{i+1})| + |f(v_{i+1})|=1$, then
$v_{i+1}v_{i+2}$ must dominate $v_{i+1}$ so that $f(v_{i+1}v_{i+2})=[3]$, a contradiction to (\ref{A1}).
Thus, $|f(v_{i})| + |f(v_{i}v_{i+1})| + |f(v_{i+1})|=2$.
By the same argument, we have $|f(v_{i})| + |f(v_{i-1}v_i)| + |f(v_{i-1})|=2$.
Suppose that $f(v_{i-1})\neq \emptyset$ and $f(v_{i+1})\neq \emptyset$.
Without loss of generality, assume that $f(v_i)= \{1\}$, $f(v_{l-1})=\{3\}$ and $f(v_{l+1})=\{2\}$.
Then to dominate $v_{i-1}v_i$ and $v_iv_{i+1}$, we must have $f(v_{l-2}v_{l-1})=\{2\}$ and $f(v_{l+1}v_{l+2})=\{3\}$.
It follows from (\ref{A1}) that $f(v_{l-2})= f(v_{l+2}) =\emptyset$.
To dominate $v_{i-2}$ and $v_{i+2}$, we must have $f(v_{l-3}v_{l-2})=\{1, 3\}$ and $f(v_{l+2}v_{l+3})=\{1, 2\}$.
It follows from (\ref{A1}) that $f(v_{l-3})= f(v_{l+3}) =\emptyset$.
To dominate $v_{i-3}$ and $v_{i+3}$, we must have $\{2\} \subseteq f(v_{l-4}v_{l-3})$ and $\{3\} \subseteq f(v_{l+3}v_{l+4})$.
Define $g : V(C_n) \cup E(C_n) \rightarrow \mathcal{P}([3])$
by $g(v_{i-3})= g(v_{i+3}) =\{1\}$, $g(v_{i-2}v_{i-1})= g(v_{i+1}v_{i+2}) =[3]$,
$g(v_{i-3}v_{i-2}) = g(v_{i-2}) = g(v_{i-1}) = g(v_{i-1}v_i)=  g(v_iv_{i+1}) = g(v_{i+1}) = g(v_{i+2}) = g(v_{i+2}v_{i+3}) =\emptyset$
and $g(x)=f(x)$ otherwise.
Then $g$ is a M3RDF of $C_n$ with weight $\omega(f)$.
As above, this implies that $\gamma_{r3}^\star(C_n) \geq \frac{4n+2}{3}$.

Now assume that for each $v_i \in N$, $|f(v_{i-1}v_i)|=1$ or $|f(v_iv_{i+1})|=1$.
Since $\sum_{x \in N_M[v_i]}|f(x)| \geq 2$ for $v_i \in N$ and $\sum_{x \in N_M[v_i]}|f(x)| =3$ for $v_i \in V(C_n) \setminus N$,
we have
\[\gamma_{r3}^\star(C_n) \geq \frac{3(n-t)}{2} + (t + \frac{t}{2}) = \frac{3n}{2} \geq \frac{4n+2}{3},\]
where $t=|N|$.

By Theorem \ref{thm:add-edge12}, $\gamma_{r3}^\star(C_n) \leq \gamma_{r3}^\star(P_n) +1$.
Thus, we have $\gamma_{r3}^\star(C_n) = \frac{4n+2}{3}$.

\vskip5pt
\textbf{Case 2.} $n \equiv 2$ (mod $3$).

If there exists some $k$ such that $|f(v_{k})| + |f(v_{k}v_{k+1})| + |f(v_{k+1})| \geq 3$, then
define $g : V(P_n - \{v_k, v_{k+1}\}) \cup E(P_n - \{v_k, v_{k+1}\}) \rightarrow \mathcal{P}([3])$ by
$g(v_{k-1})= f(v_{k-1}) \cup f(v_{k-1}v_k)$, $g(v_{k+2}) = f(v_{k+1}v_{k+2}) \cup f(v_{k+2})$ and
$g(x)=f(x)$ otherwise.
It is easy to see that $g$ is a M3RDF of $P_{n-2}$ with weight at most $\gamma_{r3}^\star(C_n) -3$.
By Proposition \ref{prop:path}, we have
\[\gamma_{r3}^\star(C_n) \geq \omega(g) + 3 \geq \gamma_{r3}^\star(P_{n-2}) +3 = \frac{4(n-2)}{3} +3 = \frac{4n+1}{3}.\]



Now suppose that
\[|f(v_{k})| + |f(v_{k}v_{k+1})| + |f(v_{k+1})| \leq 2\]
for each $k \in [n]$.

If there exists some $k \in [n]$ such that $|f(v_k)| \geq 2$,
then define $g : V(P_n - v_k) \cup E(P_n - v_k) \rightarrow \mathcal{P}([3])$ by $g(v_{k-1}) = f(v_{k-1}) \cup  f(v_{k-1}v_k)$,
$g(v_{k+1}) = f(v_{k+1}) \cup  f(v_kv_{k+1})$ and $g(x)=f(x)$ otherwise.
Clearly, $g$ is a M3RDF of $P_{n-1}$ with weight at most $\gamma_{r3}^\star(C_n) -2$.
By Proposition \ref{prop:path}, we have
\[\gamma_{r3}^\star(C_n) \geq \omega(g) + 2 \geq \gamma_{r3}^\star(P_{n-1}) +2 \geq \frac{4(n-1)-1}{3} +2 = \frac{4n+1}{3}.\]

Assume that $|f(v_k)| \leq 1$ for each $k \in [n]$.
By the same argument as Case 1, we have $\gamma_{r3}^\star(C_n) \geq \frac{4n+1}{3}$.

Define $h : V(C_n) \cup E(C_n) \rightarrow \mathcal{P}([3])$ by $h(v_{1+3i})= \{1\}$ for $0 \leq i \leq \frac{n-6}{3}$,
$h(v_{2+3i}v_{3+3i})=[3]$ for $0 \leq i \leq \frac{n-6}{3}$, $h(v_{n-1}v_n)=[3]$ and $h(x)= \emptyset$ otherwise.
It is easy to see that $h$ is a M3RDF of $C_n$ with weight $\frac{4n+1}{3}$.
Thus, we have $\gamma_{r3}^\star(C_n) = \frac{4n+1}{3}$.

\vskip5pt
\textbf{Case 3.} $n \equiv 0$ (mod $3$).

If there exists some $k \in [n]$ such that $|f(v_k)| \geq 1$,
then define $g : V(P_n - v_k) \cup E(P_n - v_k) \rightarrow \mathcal{P}([3])$ by $g(v_{k-1}) = f(v_{k-1}) \cup  f(v_{k-1}v_k)$,
$g(v_{k+1}) = f(v_{k+1}) \cup  f(v_kv_{k+1})$ and $g(x)=f(x)$ otherwise.
Clearly, $g$ is a M3RDF of $P_{n-1}$ with weight at most $\gamma_{r3}^\star(C_n) -1$.
By Proposition \ref{prop:path}, we have
\[\gamma_{r3}^\star(C_n) \geq \omega(g) + 1 \geq \gamma_{r3}^\star(P_{n-1}) +1 \geq \frac{4(n-1)+1}{3} +1 = \frac{4n}{3}.\]

Assume that $|f(v_i)| =0$ for each $i \in [n]$.
If there exists some $k \in [n]$ such that $|f(v_kv_{k+1})|=3$,
then define $g : V(P_n - \{v_k,v_{k+1}\}) \cup E(P_n - \{v_k, v_{k+1}\}) \rightarrow \mathcal{P}([3])$ by $g(v_{k-1}) = f(v_{k-1}) \cup  f(v_{k-1}v_k)$,
$g(v_{k+2}) = f(v_{k+2}) \cup  f(v_{k+1}v_{k+2})$ and $g(x)=f(x)$ otherwise.
Clearly, $g$ is a M3RDF of $P_{n-2}$ with weight at most $\gamma_{r3}^\star(C_n) -3$.
By Proposition \ref{prop:path}, we have
\[\gamma_{r3}^\star(C_n) \geq \omega(g) + 3 \geq \gamma_{r3}^\star(P_{n-2}) +3 \geq \frac{4(n-2)-1}{3} +3 = \frac{4n}{3}.\]

Now assume that  $|f(v_iv_{i+1})| \leq 2$ for each $i \in [n]$.
Then the assignment of edges under $f$ should start one of the following :
(i) $\{1\}, \{2,3\}, \{1\}, \{2,3\} \dotsc$, (ii) $\{2\}, \{1,3\}, \{2\}, \{1,3\} \dotsc$,
(iii)  $\{3\}, \{1,2\}, \{3\}, \{1,2\} \dotsc$.
Thus, one can check that
\[\gamma_{r3}^\star(C_n) \geq \lceil \frac{n}{2} \rceil + 2\lfloor \frac{n}{2} \rfloor \geq \frac{4n}{3}.\]
Therefore, we have $\gamma_{r3}^\star(C_n) \geq \frac{4n}{3}$.

Define $h : V(C_n) \cup E(C_n) \rightarrow \mathcal{P}([3])$ by $h(v_{1+3i})= \{1\}$ for $0 \leq i \leq \frac{n-3}{3}$,
$h(v_{2+3i}v_{3+3i})=[3]$ for $0 \leq i \leq \frac{n-3}{3}$ and $h(x)= \emptyset$ otherwise.
It is easy to see that $h$ is a M3RDF of $C_n$ with weight $\frac{4n}{3}$.
Thus, we have $\gamma_{r3}^\star(C_n) = \frac{4n}{3}$.

\end{proof}

\begin{prop}\label{prop:Kn}
For $n \geq 2$,
\begin{equation*}
\gamma_{r3}^\star(K_n) = \left\{
                      \begin{array}{ll}
                     \frac{3n}{2} & \hbox{if $n$ is even;} \\
                     \frac{3n -1}{2} & \hbox{if $n$ is odd.}
                      \end{array}
                     \right.
\end{equation*}
\end{prop}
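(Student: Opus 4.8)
The plan is to prove the two inequalities $\gamma_{r3}^\star(K_n)\le \lfloor 3n/2\rfloor$ and $\gamma_{r3}^\star(K_n)\ge \lfloor 3n/2\rfloor$ separately, the first by an explicit construction and the second by a charging argument over the vertices of $K_n$. Note that $\lfloor 3n/2\rfloor$ equals $3n/2$ for even $n$ and $(3n-1)/2$ for odd $n$, so this is exactly the claimed value.

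For the upper bound I would use a maximum matching of $K_n$. If $n$ is even, fix a perfect matching $\{e_1,\dots,e_{n/2}\}$ and set $f(e_i)=[3]$ for all $i$ and $f(x)=\emptyset$ otherwise. Every vertex is incident with exactly one matched edge and every unmatched edge is adjacent in $M(K_n)$ to a matched edge, so $f$ is an M3RDF of weight $3n/2$. If $n$ is odd, fix a matching $\{e_1,\dots,e_{(n-1)/2}\}$ missing a single vertex $w$, and set $f(e_i)=[3]$, $f(w)=\{1\}$ and $f(x)=\emptyset$ otherwise; now $w$ carries a nonempty label, while each edge $wu$ is adjacent to the matched edge through $u$, so the same verification shows $f$ is an M3RDF of weight $(3n-1)/2$. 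Hence $\gamma_{r3}^\star(K_n)\le\lfloor 3n/2\rfloor$.

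For the lower bound, let $f$ be a $\gamma_{r3}^\star(K_n)$-function and put $d(v)=\sum_{u\ne v}|f(vu)|$ for $v\in V(K_n)$. Since each term $|f(e)|$ is counted by exactly the two endpoints of $e$, we get $\sum_{v\in V(K_n)}g(v)=\omega(f)$, where $g(v):=|f(v)|+\tfrac12 d(v)$. I would then observe that $g(v)\ge\tfrac32$ for every $v$ outside the set $X:=\{v:\ |f(v)|=1,\ d(v)=0\}$: if $f(v)=\emptyset$ the vertex condition at $v$ forces $d(v)\ge 3$; if $|f(v)|\ge 2$ then $g(v)\ge 2$; and if $|f(v)|=1$ with $d(v)\ge 1$ then $g(v)\ge\tfrac32$. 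For $v\in X$ we only have $g(v)=1$. The decisive step is that $|X|\le 1$: if $v,v'\in X$ were distinct, then $f(vv')=\emptyset$ and every edge incident with $v$ or $v'$ has empty value, so $\bigcup_{x\in N_M(vv')}f(x)=f(v)\cup f(v')$, a set of size at most $2$, contradicting that the empty-valued element $vv'$ must be rainbow dominated. Therefore $\omega(f)=\sum_v g(v)\ge\tfrac32(n-|X|)+|X|=\tfrac{3n}{2}-\tfrac{|X|}{2}\ge\tfrac{3n-1}{2}$, and since $\omega(f)$ is an integer this yields $\omega(f)\ge\tfrac{3n}{2}$ for even $n$ and $\omega(f)\ge\tfrac{3n-1}{2}$ for odd $n$, which together with the construction proves the proposition.

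The only real difficulty lies in the lower bound, and concretely in the claim that at most one vertex can be ``cheap'' (have $g(v)<\tfrac32$); everything else is a routine weight count together with the final integrality/parity remark. It is worth sanity-checking the boundary cases $n=2$ and $n=3$, where $K_n$ equals $P_2$ and $C_3$, against the values already recorded in Propositions~\ref{prop:path} and~\ref{prop:cycle}; both agree.
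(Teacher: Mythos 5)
Your proposal is correct. The upper-bound construction is exactly the one in the paper (a maximum matching labelled $[3]$, plus a singleton label on the unmatched vertex when $n$ is odd). The lower bound, however, takes a genuinely different and in fact cleaner route: the paper proves it by induction on $n$, peeling off a vertex or an edge according to a fairly long case analysis (an edge $v_iv_j$ with $|f(v_i)|+|f(v_j)|+|f(v_iv_j)|=3$, a vertex with $|f(v_i)|\ge 2$, and so on, with a final double-counting step in the residual case), whereas you avoid induction entirely by distributing the weight of $f$ over the vertices via $g(v)=|f(v)|+\tfrac12\sum_{u\ne v}|f(vu)|$, showing $g(v)\ge\tfrac32$ except on the set $X$ of vertices with $|f(v)|=1$ and no weight on incident edges, and then using the completeness of $K_n$ to force $|X|\le 1$ (the subdivision vertex of an edge joining two members of $X$ would see only $f(v)\cup f(v')$, of size at most $2$). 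The final integrality remark handles both parities uniformly. Each step checks out against the paper's definitions of $N_M(v)$ and $N_M(e)$: in particular $f(v)=\emptyset$ really does force total incident edge weight at least $3$, since a vertex of the middle graph is adjacent only to its incident edges. Your argument buys a shorter, self-contained proof with no induction base to verify; the paper's inductive scheme is more in keeping with the deletion lemmas it develops elsewhere (Theorems~\ref{thm:del-vertex} and~\ref{thm:add-edge12}) but is longer and harder to check. The sanity checks at $n=2,3$ against Propositions~\ref{prop:path} and~\ref{prop:cycle} are a nice touch and both agree.
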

\begin{proof}
Let $V(K_n) =\{v_1, \dotsc, v_n\}$.
If $n$ is odd, then define $g : V(K_n) \cup E(K_n) \rightarrow \mathcal{P}([3])$ by $g(v_1)=\{1\}$, $g(v_{2i}v_{2i+1})=[3]$
for $1 \leq i \leq \frac{n-1}{2}$ and $g(x)=\emptyset$ otherwise.
It is easy to see that $g$ is a M3RDF of $K_n$ with weight $\frac{3n-1}{2}$.
If $n$ is even, then define $h : V(K_n) \cup E(K_n) \rightarrow \mathcal{P}([3])$ by $h(v_{2i-1}v_{2i})=[3]$
for $1 \leq i \leq \frac{n}{2}$ and $h(x)=\emptyset$ otherwise.
It is easy to see that $h$ is a M3RDF of $K_n$ with weight $\frac{3n}{2}$.

Now we claim that $\gamma_{r3}^\star(K_n) \geq \frac{3n}{2}$ if $n$ is even and $\gamma_{r3}^\star(K_n) \geq \frac{3n-1}{2}$ if $n$ is odd.
One can easily check that $\gamma_{r3}^\star(K_2)=3$, $\gamma_{r3}^\star(K_3)=4$,
and $\gamma_{r3}^\star(K_4)= 6$.
We proceed by induction on $n$.
Assume that $n \geq 5$.

\textbf{Case 1.} $n$ is odd.

Let $f$ be a $\gamma_{r3}^\star(K_n)$-function.
If $\sum_{i=1}^n|f(v_i)| =0$, then to dominate each $v_i$ we must have $[3] \subseteq \cup_{e \in N_M(v_i)}f(e)$ for each $v_i \in V(K_n)$.
This implies that
\[2\gamma_{r3}^\star(K_n) = \sum_{i=1}^n \sum_{e \in N_M(v_i)} |f(e)| \geq 3n.\]
Thus, $\gamma_{r3}^\star(K_n) > \frac{3n-1}{2}$.

Assume that $\sum_{i=1}^n|f(v_i)| \neq 0$.
Without loss of generality, assume that $f(v_1) \neq \emptyset$.
Define $g : V(K_n -v_1) \cup E(K_n -v_1) \rightarrow \mathcal{P}([3])$ by $g(v_i) = f(v_i) \cup f(v_1v_i)$
for $2 \leq i \leq n$ and $g(x)=f(x)$ otherwise.
Then $g$ is a M3RDF with weight at most $\gamma_{r3}^\star(K_n) -1$.
By the induction  hypothesis, we have
\[\gamma_{r3}^\star(K_n) \geq \omega(g) +1 \geq \gamma_{r3}^\star(K_{n-1}) +1 = \frac{3(n-1)}{2} +1 =\frac{3n -1}{2}.\]

\vskip5pt
\textbf{Case 2.} $n$ is even.

We choose a $\gamma_{r3}^\star(K_n)$-function $f$ so that the size of $\{v_i \in V(K_n) \mid f(v_i) \neq \emptyset\}$ is as small as possible.

For an edge $v_iv_j \in E(K_n)$, if $|f(v_i)| + |f(v_j)| + |f(v_iv_j)| = 3$, then
define $h : V(K_n -\{v_i, v_j\}) \cup E(K_n -\{v_i, v_j\}) \rightarrow \mathcal{P}([3])$ by $h(v_k)= f(v_k) \cup f(v_kv_i) \cup f(v_kv_j)$
for $v_k \in V(K_n) \setminus \{v_i, v_j\}$ and $h(x)=f(x)$ otherwise.
Clearly $h$ is a M3RDF with weight $\gamma_{r3}^\star(K_n) -3$.
By the induction  hypothesis, we have
\[\gamma_{r3}^\star(K_n) \geq \omega(h) +3 \geq \gamma_{r3}^\star(K_{n-2}) +3 = \frac{3(n-2)}{2} +3 =\frac{3n}{2}.\]

Assume that $|f(v_i)| + |f(v_j)| + |f(v_iv_j)| \leq 2$ for any $v_i, v_j \in V(K_n)$.

For a vertex $v_i \in V(K_n)$, if $|f(v_i)|=2$, then
define $g : V(K_n - v_i) \cup E(K_n - v_i) \rightarrow \mathcal{P}([3])$ by $g(v_k)= f(v_k) \cup f(v_kv_i)$
for $v_k \in V(K_n) \setminus \{v_i\}$ and $g(x)=f(x)$ otherwise.
Clearly $g$ is a M3RDF with weight $\gamma_{r3}^\star(K_n) -2$.
By the induction  hypothesis, we have
\[\gamma_{r3}^\star(K_n) \geq \omega(g) +2 \geq \gamma_{r3}^\star(K_{n-1}) +2   = \frac{3(n-1)-1}{2} +2 =\frac{3n}{2}.\]

Assume that $|f(v_i)| \leq 1$ for each $v_i \in V(K_n)$.

Let $N:= \{ v_i \in V(K_n) \mid f(v_i) \neq \emptyset \}$.
For a fixed $v_i \in N$, suppose that there exists no $v_j \in V(K_n)$ such that $f(v_iv_j) \neq \emptyset$.
Without loss of generality, we may assume that $f(v_i) = \{ 1 \}$.
To dominate elements in $\{v_iv_k \mid v_k \in V(K_n) \setminus \{v_i\} \}$,
$\bigcup_{x \in N_M[v_k]} f(x)$ should contain $2$ and $3$ for each $v_k \in V(K_n) \setminus \{v_i\}$.
Thus, we have $s_2, s_3 \geq \frac{n}{2}$,
where $s_j := |\{x \in V(K_n - v_i) \cup E(K_n - v_i) \mid j \in f(x) \}|$.
If $f(x) \neq \emptyset$ for all $x \in N_M[v_k] \setminus \{v_iv_k\}$, then $\sum_{x \in N_M[v_k]} |f(x)| \geq n$.
This implies $\gamma_{r3}^\star(K_n) \geq \frac{3n}{2}$.
Assume that for $v_k \in V(K_n) \setminus \{v_i\}$ there exists $x \in N_M[v_k] \setminus \{v_iv_k\}$ such that $f(x)=\emptyset$.
Then to dominate such an element $x$, $\bigcup_{y \in N_M(x)} f(y)$ should contain $1$.
Thus, we have $s_1 \geq \frac{n}{2}$ so that $\omega(f) \geq |f(v_i)| + s_2 + s_3 + s_1 = 1+ \frac{3n}{2}$.

Assume that for each $v_i \in N$ there exists a vertex $v_j \in V(K_n)$ such that $f(v_iv_j) \neq \emptyset$.
Let $t$ be the size of set $N$.
For $v_j \in V(K_n) \setminus N$, $\bigcup_{x \in N_M[v_j]} f(x)$ should contain $[3]$.
For $v_i \in N$, $\bigcup_{x \in N_M(v_i)} f(x)$ contains at least one element.
Thus, we have $\gamma_{r3}^\star(K_n) \geq \frac{3(n-t)}{2} + (t + \frac{t}{2}) = \frac{3n}{2}$.
This completes the proof.
\end{proof}

\section{Lower and upper bounds for trees}\label{sec:main3}

In this section, we provide lower and upper bounds for the middle $3$-rainbow domination number of trees in terms of the matching number.

\begin{thm}\label{thm:simple-upper}
For every tree $T$ of order $n$, $\gamma_{r3}^\star(T)  \leq  n +\alpha'(T)$.
\end{thm}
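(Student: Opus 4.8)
The plan is to induct on the order $n$ of the tree $T$, following the same strategy used in Theorem~\ref{thm:tree-order}: prune a carefully chosen subtree hanging off the end of a longest path, apply the inductive hypothesis to what remains, and extend an optimal M3RDF of the smaller tree by a local assignment whose weight is offset by a corresponding increase in the matching number. The base cases are small trees (for instance $P_2$, $P_3$, the stars $K_{1,n-1}$, and the double stars $DS_{p,q}$), where one checks directly that $\gamma_{r3}^\star(T) \le n + \alpha'(T)$; for a star $\gamma_{r3}^\star(K_{1,n-1}) = n+1$ and $\alpha'(K_{1,n-1}) = 1$, and for a double star $\gamma_{r3}^\star(DS_{p,q}) = n+1$ while $\alpha'(DS_{p,q}) = 2$, so both hold comfortably.

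For the inductive step, assume $T$ is neither a star nor a double star, so $\mathrm{diam}(T) \ge 4$. Pick a diametral path $x_0 x_1 \dotsc x_d$ and root $T$ at $x_0$; as in Theorem~\ref{thm:tree-order} one splits into the three cases $\deg_T(x_{d-1}) \ge 3$; $\deg_T(x_{d-1}) = 2$ and $\deg_T(x_{d-2}) \ge 3$; and $\deg_T(x_{d-1}) = \deg_T(x_{d-2}) = 2$. In each case let $T'$ be $T$ minus the appropriate bottom subtree ($T_{x_{d-1}}$, $T_{x_{d-2}}$, or a pendant $P_3$ respectively), so $T'$ is a tree of smaller order and the inductive hypothesis gives an M3RDF $f$ of $T'$ with $\omega(f) \le |V(T')| + \alpha'(T')$. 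I then extend $f$ to an M3RDF $g$ of $T$ exactly as in the proof of Theorem~\ref{thm:tree-order} (assigning $[3]$ to one edge deep in the pruned part, $\{1\}$ to the relevant leaves, $\emptyset$ elsewhere there, and keeping $f$ on the rest), and I bound the added weight. The key arithmetic point: removing the subtree drops the order by $|V(T_v)|$ and drops the matching number by a controlled amount $m$ (one can always extend a maximum matching of $T'$ to one of $T$ using $m$ edges inside the pruned subtree — e.g. $m=1$ when pruning a pendant $P_3$ or a $T_{x_{d-1}} \cong K_{1,t-1}$, and $m = r$ when pruning $T_{x_{d-2}} \cong S_{t-1,r}$), so it suffices to verify in each case that the weight $\omega(g) - \omega(f)$ added by the extension is at most $|V(T_v)| + m$. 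Since the extension weight is already known from Theorem~\ref{thm:tree-order} to be at most $\tfrac{3}{2}|V(T_v)|$-ish but actually the explicit small constants there ($t+1$, $t+2r$, $4$) are easily checked against $|V(T_v)| + m$ (namely $t+1 \le t+1$, $t+2r \le (t+r)+r$, $4 \le 3+1$), the induction closes.

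The main obstacle I anticipate is the bookkeeping for the matching number: I must be sure that $\alpha'(T) \ge \alpha'(T') + m$ with the right $m$ in each case, i.e. that the pruned subtree genuinely contributes $m$ independent edges that are compatible with some maximum matching of $T'$. This is where care is needed — a maximum matching of $T$ need not restrict to a maximum matching of $T'$, so the clean inequality is $\alpha'(T') \le \alpha'(T) - m$ rather than an equality, and I should argue it by taking a maximum matching of $T'$ and adjoining a fixed matching of size $m$ living strictly inside the removed subtree (which exists because $T_{x_{d-1}}$, $T_{x_{d-2}}$, $P_3$ each contain such a matching and are joined to $T'$ by the single cut edge). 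Once that inequality is in hand, each case reduces to the one-line estimate $\omega(g) \le \omega(f) + (|V(T_v)| + m) \le \bigl(|V(T')| + \alpha'(T')\bigr) + \bigl(|V(T_v)| + m\bigr) \le |V(T)| + \alpha'(T)$, completing the induction.
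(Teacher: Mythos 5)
Your induction is sound, but it takes a genuinely different and much heavier route than the paper. The paper's proof is a two-line direct construction: fix a maximum matching $M$ of $T$, set $f(e)=[3]$ for $e\in M$, $f(u)=\{1\}$ for each $M$-unsaturated vertex $u$, and $f=\emptyset$ elsewhere. Every vertex with empty label is saturated, hence sees a $[3]$-edge in $N_M$; every edge outside $M$ meets an edge of $M$ (by maximality of $M$), hence is adjacent to a $[3]$-edge in the middle graph. The weight is $3\alpha'(T)+(n-2\alpha'(T))=n+\alpha'(T)$, and the argument uses nothing about trees, so it actually proves the bound for all graphs. Your pruning induction reaches the same conclusion, and your key bookkeeping step is handled correctly: you bound $\alpha'(T)\ge\alpha'(T')+m$ by adjoining to a maximum matching of $T'$ a matching of size $m$ living entirely inside the removed subtree (size $1$ for $K_{1,t-1}$ and for the pendant $P_3$, size $r$ for $S_{t-1,r}$ via the $r$ pendant edges of the subdivided legs), and the weight checks $t+1\le t+1$, $t+2r\le(t+r)+r$, $4\le 3+1$ all close, several with equality. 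In effect your local extensions ($[3]$ on one edge per pruned matching edge, $\{1\}$ on the leftover leaves) are the global construction applied piecewise, so the induction is a disguised form of the direct argument; what it costs you is the case analysis and the dependence on the tree structure, and what the direct argument buys is brevity and generality beyond trees.
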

\begin{proof}
Take a maximum matching $M$ in $T$.
Let $U$ be the set of vertices which are not saturated by $M$.
Define a function by $f(e)=[3]$ for $e \in M$, $f(u)=\{1\}$ for $u \in U$ and $f(x)=\emptyset$ otherwise.
Clearly, $f$ is a M3RDF of $T$. Thus, $\gamma_{r3}^\star(T) \leq |U| +3|M| = (n- 2\alpha'(T)) + 3\alpha'(T) = n + \alpha'(T)$.
\end{proof}

\begin{lem}\label{lem:sum}
Let $T$ be a tree and $P_3=uvw$ a path in $T$ with $deg_T(v)=2$ and $deg_T(w)=1$.
Then $|f(uv)| + |f(v)| + |f(vw)| + |f(w)| \geq 3$ for any $\gamma_{r3}^\star(T)$-function $f$.
\end{lem}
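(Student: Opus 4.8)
The plan is to argue by a short case analysis on the values that a $\gamma_{r3}^\star(T)$-function $f$ can take on the four elements $uv$, $v$, $vw$, $w$ of the middle graph $M(T)$, using only the local domination constraints around the leaf $w$, the pendant edge $vw$, and the degree-$2$ vertex $v$. First I would record the relevant neighbourhoods inside $M(T)$: the element $w$ is adjacent only to $vw$, so $N_M(w)=\{vw\}$; the element $vw$ is adjacent to $w$, to $v$, and to the one edge $uv$ incident with $v$ other than $vw$, so $N_M(vw)=\{w,v,uv\}$; and the element $v$ is adjacent (in $M(T)$) to $uv$ and $vw$ only, so $N_M(v)=\{uv,vw\}$. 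Write $S=|f(uv)|+|f(v)|+|f(vw)|+|f(w)|$; I want $S\ge 3$.

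Next I would dispose of the easy subcases. If $f(vw)=\emptyset$, then the domination condition at $vw$ forces $f(w)\cup f(v)\cup f(uv)=[3]$, hence $|f(w)|+|f(v)|+|f(uv)|\ge 3$ and so $S\ge 3$. Likewise, if $f(w)=\emptyset$ then $f(vw)=[3]$, giving $S\ge 3$ immediately; and if $f(v)=\emptyset$ then $f(uv)\cup f(vw)=[3]$, so $|f(uv)|+|f(vw)|\ge 3$ and again $S\ge 3$. So from now on I may assume $f(w)$, $f(v)$, $f(vw)$ are all nonempty, which already gives $S\ge 3$ unless each of them is a singleton and $f(uv)=\emptyset$, i.e. the only remaining case to rule out as "$S=2$" is actually impossible since three nonempty sets already sum to at least $3$. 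The genuinely delicate point is therefore not these cases but making sure I have not overlooked a configuration with $S\le 2$: the above shows any function with $S\le 2$ must have at least two of $\{f(v),f(vw),f(w)\}$ empty, and I would simply check each such pattern (e.g. $f(w)=f(v)=\emptyset$ forces $f(vw)=[3]$, contradiction with $S\le 2$; $f(w)=f(vw)=\emptyset$ is impossible since $w$ is then undominated; $f(v)=f(vw)=\emptyset$ forces $f(uv)=[3]$ from the condition at $v$, contradiction) to reach a contradiction in every case.

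I expect the main (really the only) obstacle to be bookkeeping: being careful that $N_M(w)$, $N_M(vw)$, $N_M(v)$ are exactly as above — in particular that $v$ has no neighbour in $M(T)$ other than its two incident edges because $\deg_T(v)=2$, and that $w$ has $vw$ as its unique neighbour because $\deg_T(w)=1$ — and then verifying that every assignment with total weight $\le 2$ on these four elements leaves some element undominated. Since there are only finitely many such assignments (each of the four elements gets a subset of $[3]$ of size $0$ or $1$ with total size $\le 2$), the verification is completely mechanical, and the lemma follows.
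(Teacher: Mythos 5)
Your proposal is correct and follows essentially the same case analysis as the paper: dispose of each of the cases $f(vw)=\emptyset$, $f(w)=\emptyset$, $f(v)=\emptyset$ via the local domination conditions (with exactly the neighbourhoods $N_M(w)=\{vw\}$, $N_M(vw)=\{w,v,uv\}$, $N_M(v)=\{uv,vw\}$ you identify), and note that otherwise all three sets are nonempty so the sum is at least $3$. The extra verification of "$S\le 2$" patterns at the end is redundant but harmless.
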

\begin{proof}
If $f(w) = \emptyset$, then to dominate $w$ it follows that $f(vw)=[3]$.
If $f(vw) = \emptyset$, then to dominate $vw$ it follows that $f(uv) \cup f(v) \cup  f(w) = [3]$.
If $f(v) = \emptyset$, then to dominate $v$ it follows that $f(uv) \cup  f(vw) = [3]$.
In the above cases, we have $|f(uv)| + |f(v)| + |f(vw)| + |f(w)| \geq 3$.
If  $f(w)$, $f(vw)$ and $f(v)$ are not empty, then $|f(uv)| + |f(v)| + |f(vw)| + |f(w)| \geq 3$.
This completes the proof.
\end{proof}

\begin{thm}\label{thm:trees}
For every tree $T$, $\gamma_{r3}^\star(T) \geq \frac{5\alpha'(T)}{2}$.
\end{thm}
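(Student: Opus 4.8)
The plan is to induct on the order of $T$ (or equivalently on $\alpha'(T)$), peeling off a small subtree near the end of a longest path, exactly in the spirit of the proof of Theorem~\ref{thm:tree-order}. First I would dispose of the base cases: when $\alpha'(T)=0$ the bound is trivial, and when $\alpha'(T)=1$ one checks that $T$ is a star $K_{1,m}$, for which $\gamma_{r3}^\star(K_{1,m})=m+1\geq 3=\tfrac{5\cdot 1}{2}$ after rounding (here one uses that $\gamma_{r3}^\star$ is an integer, so the bound is really $\gamma_{r3}^\star(T)\geq\lceil 5\alpha'(T)/2\rceil$ in spirit, though as stated the inequality $m+1\geq 5/2$ suffices). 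The double star $DS_{p,q}$ has $\alpha'=2$ and $\gamma_{r3}^\star=p+q+3\geq 5$, so that case is fine too.

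For the inductive step, take $T$ with diameter at least four, fix a longest path $x_0x_1\cdots x_d$ chosen to maximize $\deg_T(x_{d-1})$, root at $x_0$, and split into the same three cases as in Theorem~\ref{thm:tree-order}. In each case let $T'$ be the tree obtained by deleting the relevant bottom subtree $T_{x_{d-1}}$, $T_{x_{d-2}}$, or $T_{x_{d-2}}$ respectively. The key two ingredients are: (a) a lower bound on how much weight any $\gamma_{r3}^\star(T)$-function $f$ must spend on $V(T_{x_{d-1}})\cup E(T_{x_{d-1}})$ together with the connecting edge — here Lemma~\ref{lem:sum} gives that a pendant $P_3$ forces weight at least $3$ on its four elements, and in Case~1 (where $x_{d-1}$ is a support vertex of degree $t\geq 3$) each of the $t-1$ leaf-children and their pendant edges forces a comparable amount; and (b) the matching number drops by a controlled amount: $\alpha'(T)-\alpha'(T')$ equals $1$ in Case~3 (deleting $P_3$), $1$ in Case~1, and roughly $r$ in Case~2. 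Restricting $f$ to $T'$, after absorbing the values of deleted incident edges into the surviving endpoint (as in the $h$-construction of Theorem~\ref{thm:del-vertex}), yields a M3RDF of $T'$, so $\gamma_{r3}^\star(T')\leq \omega(f)-(\text{weight removed})$, and then the induction hypothesis $\gamma_{r3}^\star(T')\geq \tfrac{5\alpha'(T')}{2}$ closes the loop provided (weight removed) $\geq \tfrac52(\alpha'(T)-\alpha'(T'))$.

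So the crux is a local counting lemma: I must show that in each case the restriction of $f$ to the deleted part plus the connecting edge carries weight at least $\tfrac52$ times the drop in matching number. Case~3 is the tight one — a pendant $P_3$ contributes exactly to a matching drop of $1$, and I need its contribution to $f$ to be at least $5/2$, i.e.\ at least $3$ since weights are integers; Lemma~\ref{lem:sum} gives precisely $|f(uv)|+|f(v)|+|f(vw)|+|f(w)|\geq 3$ where $uvw = x_{d-2}x_{d-1}x_d$, which is exactly enough. In Cases~1 and~2 the matching drop is $1$ and $r$ respectively while the removed configuration is larger, so the needed inequality should have slack; the main obstacle is bookkeeping the edge $x_{d-3}x_{d-2}$ (or $x_{d-2}x_{d-1}$) carefully so that weight is not double-counted when passing to $T'$, and handling the subcase in Case~1 where $x_{d-1}$ could have several leaf children each of which must be rainbow-dominated, which is where the choice of diametral path maximizing $\deg_T(x_{d-1})$ and Observation/Lemma-type arguments on pendant stars get used. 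I expect Case~2 with small $r$ to need the most care, since there $S_{t-1,r}$ mixes subdivided and unsubdivided pendant edges and one must argue each subdivided pendant path again contributes $3$ via Lemma~\ref{lem:sum} while the matching number only sees $r+1$ edges.
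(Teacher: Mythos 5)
Your overall strategy (induct on the order, peel off the bottom of a diametral path, and compare the weight removed with $\tfrac52$ times the drop in matching number) is the right one and matches the paper's, but there is a genuine gap at exactly the point you identify as tight. In your Case~3 you assert that deleting the pendant path $T_{x_{d-2}}\cong P_3$ drops the matching number by $1$, so that the weight $3$ guaranteed by Lemma~\ref{lem:sum} suffices. This is false in general: the edges of $T$ meeting $\{x_{d-2},x_{d-1},x_d\}$ form a $P_4$ on $x_{d-3},\dotsc,x_d$, so a maximum matching of $T$ can lose \emph{two} edges under this deletion (take $T=P_6$: $\alpha'(P_6)=3$ but $\alpha'(P_3)=1$). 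When the drop is $2$ you need the removed weight to be at least $5$, and Lemma~\ref{lem:sum} only gives $3$, so the induction does not close. A similar accounting problem appears in your Case~2, where the drop for the spider $S_{t-1,r}$ is $r+1$ (or more, counting the connecting edge) rather than ``roughly $r$.''

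The paper's proof repairs this with two devices you do not have. First, it normalizes the situation before any casework: if some pendant edge $uv$ lies outside a maximum matching $M$, it deletes the leaf $v$ (absorbing $f(uv)$ into $u$) and applies induction with $\alpha'$ unchanged; so one may assume every pendant edge lies in \emph{every} maximum matching, which forces all support vertices to have degree $2$ and, in particular, makes your Case~1 ($\deg(x_{d-1})\ge 3$) disappear. Paths are then dispatched separately via Proposition~\ref{prop:path}, and the remaining casework is on $\deg(x_{d-2})$ and $\deg(x_{d-3})$, with the diametral path chosen to maximize $|f(x_{d-2})|$ rather than $\deg(x_{d-1})$. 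Second, in the genuinely tight configuration ($\deg(x_{d-2})=\deg(x_{d-3})=2$) it does \emph{not} always delete a $P_3$: depending on whether $f(x_{d-4}x_{d-3})=[3]$, it either deletes only $\{x_{d-1},x_d\}$ (matching drop $1$, weight drop $3$) or deletes the whole pendant $P_4$ $T_{x_{d-3}}$ (matching drop exactly $2$, weight drop at least $5$); the latter weight bound requires pinning down the optimal function's values on seven elements using minimality of $f$, which is strictly more than Lemma~\ref{lem:sum} delivers. Without these two ingredients your inductive step fails on the tight cases, so the proposal as written is not a proof.
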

\begin{proof}
We proceed by induction on the order $n$ of $T$.
Obviously, the statement is true for all trees of order $n \leq 4$.

Let $T$ be a tree of order $n \geq 5$.
Suppose that every tree $T'$ of order $n' (< n)$ satisfies $\gamma_{r3}^\star(T') \geq  \frac{5\alpha'(T')}{2}$.
Let $M$ be a maximum matching in $T$.
If $T$ is a star, then $\alpha'(T)=1$ and so $n+1 = \gamma_{r3}^\star(T) > \frac{5}{2}$.
Assume that $T$ is a double star $DS_{p,q}$ with $p \geq q \geq 1$.
Then $\alpha'(T)=2$ and so $n+1 = \gamma_{r3}^\star(T) > 5$.
Now we assume that $T$ is neither a star or a double star.
Then it is easy to see that $T$ has diameter at least four.

If $T$ has a pendant edge $uv$ such that $v$ is a leaf and $uv \not\in M$, then
for any $\gamma_{r3}^\star(T)$-function $f$,
the function $g : V(T -v) \cup E(T -v)$ defined by by $g(u)=f(u) \cup f(uv)$ and $g(x)=f(x)$ otherwise is a M3RDF of $T -v$ with weight at most $\omega(f)$.
By the induction hypothesis, we have $\gamma_{r3}^\star(T) \geq \gamma_{r3}^\star(T -v) \geq \frac{5\alpha'(T-v)}{2} = \frac{5\alpha'(T)}{2}$.
Thus, we assume the following.

\vskip5pt
\textbf{Assumption 1.} All pendant edges of $T$ belong to each maximum matching.

Then it follows that all support vertices have degree $2$.
If $\Delta(T)=2$, then $T$ is a path. So, the result follows by Proposition \ref{prop:path}.
From now on, assume that $\Delta(T) \geq 3$.
Among all of diametrical paths in $T$, we choose $x_0x_1\dotsc x_d$ so that it maximizes the size of $f(x_{d-2})$.
Root $T$ at $x_0$.
It follows from Lemma \ref{lem:sum} that $|f(x_{d-2}x_{d-1})| + |f(x_{d-1})| + |f(x_{d-1}x_{d})| + |f(x_{d})| \geq 3$.
We divide our consideration into three cases.

\vskip5pt
\textbf{Case 1.} $deg(x_{d-2}) \geq 3$.

First, suppose that there is a path $x_{d-2}yz$ in $T$ such that $z$ is a leaf and $y \not\in \{x_{d-3}, x_{d-1}\}$.
Then it follows from Lemma \ref{lem:sum} that $|f(z)| + |f(yz)| + |f(y)| + |f(x_{d-2}y)| \geq 3$.
Without loss of generality, we may assume that
\begin{equation}\label{match1}
|f(x_{d-2}y)| \geq |f(x_{d-2}x_{d-1})|.
\end{equation}

Let $T' = T - \{ x_{d-1}, x_d\}$.
Then clearly $\alpha'(T') = \alpha'(T) -1$.
Since both $f(x_{d-2}y)$ and $f(x_{d-2}x_{d-1})$ are not $[3]$,
the assumption (\ref{match1}) implies that $f|_{V(T') \cup E(T')}$ is a M3RDF of $T'$ with weight at most $\omega(f)-3$.
By the induction hypothesis, we have
$\gamma_{r3}^\star(T) \geq \gamma_{r3}^\star(T') +3 \geq \frac{5\alpha'(T')}{2} +3 = \frac{5(\alpha'(T)-1)}{2} +3 > \frac{5\alpha'(T)}{2}$.

Now assume that every element of $C(x_{d-2}) \setminus \{ x_{d-1}\}$ is a leaf. Then it follows from Assumption 1
that $|C(x_{d-2}) \setminus \{ x_{d-1}\}|=1$.
Let $v \in C(x_{d-2}) \setminus \{ x_{d-1}\}$.
If $f(v)=\emptyset$, then $x_{d-2}v$ can not dominate $v$. Thus, we must have $f(v) \neq \emptyset$.
We consider the following subcases.

\vskip5pt
\textbf{Subcase 1.1.}  $|f(x_{d-2}x_{d-1})| \leq 1$.

This implies that $|f(x_{d-1})| + |f(x_{d-1}x_{d})| + |f(x_{d})| = 3$.
Let $T' = T - \{ x_{d-1}, x_d\}$.
Define $g : V(T') \cup E(T') \rightarrow \mathcal{P}([3])$ by $g(x_{d-2})=f(x_{d-2}) \cup f(x_{d-2}x_{d-1})$ and $g(x)=f(x)$ otherwise.
Clearly $g$ is a M3RDF of $T'$ with weight $\omega(f)-3$.
The result follows as above.

\vskip5pt
\textbf{Subcase 1.2.} $|f(x_{d-2}x_{d-1})| \geq 2$.

It is easy to see that $|f(x_{d-1})| + |f(x_{d-1}x_d)| + |f(x_d)| = 4 - |f(x_{d-2}x_{d-1})|$.
Let $T'= T - \{x_{d-2}, v, x_{d-1}, x_d\}$.
Define $g : V(T') \cup E(T') \rightarrow \mathcal{P}([3])$ by $g(x_{d-3})=f(x_{d-3}) \cup f(x_{d-3}x_{d-2})$ and $g(x)=f(x)$ otherwise.
Clearly $g$ is a M3RDF of $T'$ with weight $\omega(f)-5$.
By the induction hypothesis, we have
$\gamma_{r3}^\star(T) \geq \gamma_{r3}^\star(T') +5 \geq \frac{5\alpha'(T')}{2} +5 = \frac{5(\alpha'(T)-2)}{2} +5 = \frac{5\alpha'(T)}{2}$.

\vskip5pt
\textbf{Case 2.} $deg(x_{d-2}) = 2$ and $deg(x_{d-3}) \geq 3$.

It follows from Assumption 1 that $x_{d-3}x_{d-2}, x_{d-1}x_d \in M$.
If there exists a path $x_{d-3}xyz$ in $T$ such that $x \not\in \{x_{d-2}, x_{d-4} \}$ and $z$ is a leaf.
By Case 1 and Assumption 1, we may assume that $deg(x)=deg(y)=2$.
It follows from Assumption 1 and $x_{d-3}x_{d-2} \in M$ that $x_{d-3}x ,xy \not\in M$.
But, $(M \setminus \{yz\}) \cup \{xy\}$ is a maximum matching in $T$ not containing a pendant edge, a contradiction.

It follows from $deg(x_{d-3}) \geq 3$ that there exists a path $x_{d-3}yz$ such that $deg(y)=2$ and $z$ is a leaf.
By Lemma \ref{lem:sum}, we have
$|f(x_{d-2}x_{d-1})| + |f(x_{d-1})| + |f(x_{d-1}x_d)| + |f(x_d)| \geq 3$ and
$|f(x_{d-3}y)| + |f(y)| + |f(yz)| + |f(z)| \geq 3$.

Since $f$ is a $\gamma_{r3}^\star(T)$-function, it is easy to see that
\[4 \geq S(x_{d-3}yz) \geq 3 ~\text{and}~ 6 \geq S(x_{d-3}x_{d-2}x_{d-1}x_d) \geq 4,\]
where
$S(x_{d-3}yz) := |f(x_{d-3})| + |f(x_{d-3}y)| + |f(y)| + |f(yz)| + |f(z)|$ and
$S(x_{d-3}x_{d-2}x_{d-1}x_d):= |f(x_{d-3})|+ |f(x_{d-3}x_{d-2})| + |f(x_{d-2})| + |f(x_{d-2}x_{d-1})| + |f(x_{d-1})| + |f(x_{d-1}x_d)| + |f(x_d)|$.

If $S(x_{d-3}x_{d-2}x_{d-1}x_d) =6$, then without loss of generality we may assume that $f(x_{d-3}x_{d-2})= f(x_{d-1}x_d) =[3]$.
Let $T'= T -\{x_{d-1}, x_d\}$.
The function $f|_{V(T') \cup E(T')}$ is a M3RDF of $T'$ with weight $\omega(f)-3$.
By the induction hypothesis, we have
$\gamma_{r3}^\star(T) \geq \gamma_{r3}^\star(T') +3 \geq \frac{5\alpha'(T')}{2} +3 = \frac{5(\alpha'(T)-1)}{2} +3 > \frac{5\alpha'(T)}{2}$.

If $S(x_{d-3}yz) =3$, then $|f(y)| + |f(yz)| + |f(z)|=3$ and $f(x_{d-3}) = f(x_{d-3}y) = \emptyset$.
Let $T'= T -\{y, z\}$.
The function $f|_{V(T') \cup E(T')}$ is a M3RDF of $T'$ with weight $\omega(f)-3$.
The result follows as above.

Now we assume that $S(x_{d-3}yz) = 4$ and $5 \geq S(x_{d-3}x_{d-2}x_{d-1}x_d) \geq 4$.
Then without loss of generality we may assume that $f(x_{d-3}y)=[3]$ and $f(z)=\{1\}$.

If $S(x_{d-3}x_{d-2}x_{d-1}x_d)=5$, then define $g : V(T) \cup E(T) \rightarrow \mathcal{P}([3])$ by $g(x_{d-2})=\{1\}$,
$g(x_{d-1}x_d)=[3]$ and $f(x_{d-3})= f(x_{d-3}x_{d-2})= f(x_{d-2}x_{d-1}) =f(x_{d-1}) = f(x_d) = \emptyset$.
Clearly $g$ is a M3RDF of $T$ with weight $\omega(f) -1$, a contradiction.

Assume that $S(x_{d-3}x_{d-2}x_{d-1}x_d)=4$.
Then without loss of generality we may assume that $f(x_{d-2})=\{1\}$ and $f(x_{d-1}x_d)=[3]$.
Let $T' = T - \{x_{d-1}, x_d\}$.
The function $f|_{V(T') \cup E(T')}$ is a M3RDF of $T'$ with weight $\omega(f)-3$.
The result follows as above.

\vskip5pt
\textbf{Case 3.} $deg(x_{d-2}) = 2$ and $deg(x_{d-3}) =2$.

If $f(x_{d-4}x_{d-3}) =[3]$, then it is easy to see that
$|f(x_{d-3})| + |f(x_{d-3}x_{d-2})| + |f(x_{d-2})| + |f(x_{d-2}x_{d-1})| + |f(x_{d-1})|+ |f(x_{d-1}x_d)| + |f(x_d)| = 4$,
since $f$ is a $\gamma_{r3}^\star(T)$-function.
Without loss of generality,
we may assume that $f(x_{d-1}x_d)=[3]$, $f(x_{d-2})= \{1\}$ and $f(x_{d-3})= f(x_{d-3}x_{d-2}) =f(x_{d-2}x_{d-1})=f(x_{d-1})=f(x_d) = \emptyset$.
Let $T' = T - \{ x_{d-1}, x_d\}$.
Then clearly $\alpha'(T') = \alpha'(T) -1$ and $f|_{V(T') \cup E(T')}$ is a M3RDF of $T'$ with weight at most $\omega(f)-3$.
By the induction hypothesis, we have
$\gamma_{r3}^\star(T) \geq \gamma_{r3}^\star(T') +3 \geq \frac{5\alpha'(T')}{2} +3 = \frac{5(\alpha'(T)-1)}{2} +3 > \frac{5\alpha'(T)}{2}$.

If $f(x_{d-4}x_{d-3}) \neq [3]$, then it is easy to see that
$|f(x_{d-3})| + |f(x_{d-3}x_{d-2})| + |f(x_{d-2})| + |f(x_{d-2}x_{d-1})| + |f(x_{d-1})|+ |f(x_{d-1}x_d)| + |f(x_d)| = 5$,
since $f$ is a $\gamma_{r3}^\star(T)$-function.
Without loss of generality, we may assume that $f(x_{d-3})= f(x_d) =\{1\}$, $f(x_{d-2}x_{d-1})=[3]$ and $f(x_{d-3}x_{d-2}) = f(x_{d-2}) = f(x_{d-1}) = f(x_{d-1}x_d)=\emptyset$.

Let $T' = T - T_{x_{d-3}}$.
Define $g : V(T') \cup E(T') \rightarrow \mathcal{P}([3])$ by $g(x_{d-4})= f(x_{d-4}) \cup f(x_{d-4}x_{d-3})$ and $g(x)=f(x)$ otherwise.
Then clearly $\alpha'(T') = \alpha'(T) -2$ and $g$ is a M3RDF of $T'$ with weight at most $\omega(f)-5$.
By the induction hypothesis, we have
$\gamma_{r3}^\star(T) \geq \gamma_{r3}^\star(T') +5 \geq \frac{5\alpha'(T')}{2} +5 = \frac{5(\alpha'(T)-2)}{2} +5 = \frac{5\alpha'(T)}{2}$.
This completes the proof.
\end{proof}

\section{The $3$-rainbow domatic number for the middle graph of paths and cycles}\label{sec:main4}

In this section, we determine the $3$-rainbow domatic number for the middle graph of paths and cycles.

\begin{thm}[See \cite{SV}]\label{upb}
If $G$ is a graph of order $n$, then $\gamma_{rk}(G) \cdot d_{rk}(G) \leq kn$.
\end{thm}

\begin{thm}[See \cite{SV}]\label{upb2}
For every graph $G$, $d_{rk}(G) \leq \delta(G) +k$.
\end{thm}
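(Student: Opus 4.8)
The plan is to start from an arbitrary $k$-rainbow dominating family $\{f_1, \dots, f_d\}$ on $G$, fix a vertex $v$ with $deg_G(v) = \delta(G)$, and split the index set $\{1, \dots, d\}$ according to whether $f_i(v) = \emptyset$. The idea is that the two resulting subfamilies draw on disjoint ``weight budgets'': one can be controlled using the family constraint $\sum_{i=1}^d |f_i(x)| \leq k$ evaluated at $v$ itself, the other using the same constraint evaluated at the neighbors of $v$.

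First I would bound the number $a$ of indices $i$ with $f_i(v) \neq \emptyset$. Each such $f_i$ satisfies $|f_i(v)| \geq 1$, and the family condition at $v$ gives $\sum_{i=1}^d |f_i(v)| \leq k$; hence $a \leq k$.

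Next I would bound the number $b$ of indices $i$ with $f_i(v) = \emptyset$, so that $d = a + b$. For each such $i$, the rainbow domination property of $f_i$ forces $\bigcup_{u \in N(v)} f_i(u) = [k]$, whence $\sum_{u \in N(v)} |f_i(u)| \geq k$. Summing over these $b$ indices, swapping the order of summation, and then using the family condition at each $u \in N(v)$, I get
\begin{align*}
kb \;\leq\; \sum_{i:\, f_i(v)=\emptyset}\;\sum_{u \in N(v)} |f_i(u)|
&= \sum_{u \in N(v)}\;\sum_{i:\, f_i(v)=\emptyset} |f_i(u)| \\
&\leq \sum_{u \in N(v)}\;\sum_{i=1}^d |f_i(u)| \;\leq\; \sum_{u \in N(v)} k \;=\; k\,\delta(G),
\end{align*}
so $b \leq \delta(G)$. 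Combining the two estimates, $d = a + b \leq k + \delta(G)$; since the family was arbitrary, $d_{rk}(G) \leq \delta(G) + k$.

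I do not expect a genuine obstacle here: the entire content is the bookkeeping that prevents the two estimates from overlapping, which is precisely what the case split on $f_i(v)$ accomplishes. A naive uniform estimate over the closed neighborhood $N[v]$ would only yield the much weaker bound $d \leq k(\delta(G)+1)$, so separating the indices by the value of $f_i(v)$ is essential.
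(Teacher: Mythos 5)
Your proof is correct, and it is essentially the standard argument from the cited reference \cite{SV} (the paper itself states this theorem without proof): splitting the family at a minimum-degree vertex $v$ into the at most $k$ functions with $f_i(v)\neq\emptyset$ and the at most $\delta(G)$ functions with $f_i(v)=\emptyset$ is exactly how Sheikholeslami and Volkmann obtain the bound. No issues.
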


\begin{prop}\label{dr3:path-even}
For $n \geq 4$ and $n \equiv 0$ (mod $2$),
$d_{r3}(M(P_n)) = 4$.
\end{prop}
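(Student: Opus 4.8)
The plan is to establish the two inequalities $d_{r3}(M(P_n)) \leq 4$ and $d_{r3}(M(P_n)) \geq 4$ separately. For the upper bound, I would invoke Theorem~\ref{upb2}: the middle graph $M(P_n)$ has minimum degree $\delta(M(P_n)) = 1$, since the leaves of $P_n$ remain degree-one vertices in $M(P_n)$ (each endpoint of the path is incident to exactly one edge of $P_n$, hence adjacent to exactly one subdivision vertex in $M(P_n)$). Therefore $d_{r3}(M(P_n)) \leq \delta(M(P_n)) + 3 = 1 + 3 = 4$. Alternatively, one could use Theorem~\ref{upb} together with the value $\gamma_{r3}(M(P_n)) = \gamma_{r3}^\star(P_n)$ computed in Proposition~\ref{prop:path}, but the minimum-degree bound is cleaner and suffices.

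For the lower bound $d_{r3}(M(P_n)) \geq 4$, I need to exhibit a $3$-rainbow dominating family $\{f_1, f_2, f_3, f_4\}$ on $M(P_n)$, i.e. four M3RDFs on $P_n$ with $\sum_{j=1}^4 |f_j(x)| \leq 3$ for every $x \in V(P_n) \cup E(P_n)$. Here the hypothesis $n \equiv 0 \pmod 2$ is essential, and I would use it to set up a periodic pattern of period two along the path. Writing $V(P_n) \cup E(P_n) = \{x_1, x_2, \dots, x_{2n-1}\}$ as in the proof of Proposition~\ref{prop:path} (so odd-indexed $x_i$ are vertices, even-indexed $x_i$ are edges), the idea is to define the four functions so that on each consecutive block the "colors used" rotate, exploiting that an edge $x_{2i}$ has two vertex-neighbors and two edge-neighbors in $M(P_n)$ while an interior vertex $x_{2i-1}$ has exactly two edge-neighbors. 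A natural candidate: assign to the edge-vertices the full set $[3]$ in a staggered fashion across the four functions (each edge getting $[3]$ in exactly one $f_j$ and $\emptyset$ in the other three, so the capacity constraint $\sum_j |f_j| \le 3$ is tight on edges), and handle the path-vertices by giving them small nonempty sets in the functions where their incident edges are empty. One then checks that every $x$ with $f_j(x) = \emptyset$ is $N_M$-dominated with all of $[3]$ appearing.

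The main obstacle I anticipate is the boundary behavior at the two leaves $v_1, v_n$ and their pendant edges. A leaf $v_1 = x_1$ has only one neighbor in $M(P_n)$, namely the edge-vertex $x_2 = v_1v_2$; so if $f_j(x_1) = \emptyset$ then we are forced to have $f_j(x_2) = [3]$. Since this must hold for every $j$ with $f_j(x_1) = \emptyset$, and the capacity bound only allows one function to assign $[3]$ to $x_2$, at least three of the four functions must assign $x_1$ a nonempty set — and similarly for $x_{2n-1} = v_n$. Reconciling this forced behavior at both ends with a globally consistent period-two pattern (and verifying the capacity constraint $\le 3$ is never violated, particularly at the pendant edges $x_2$ and $x_{2n-2}$ and the support vertices) is where the parity $n \equiv 0 \pmod 2$ does the real work: it guarantees the two ends are "in phase" so a single periodic scheme closes up correctly. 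Once an explicit family is written down, the verification that each $f_j$ is a valid M3RDF and that the family satisfies the domatic capacity bound is a routine finite check, block by block. Combining the two bounds gives $d_{r3}(M(P_n)) = 4$.
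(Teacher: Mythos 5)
Your upper bound is correct and is exactly the paper's argument: $\delta(M(P_n))=1$ because each endpoint of $P_n$ is adjacent in $M(P_n)$ only to its unique incident edge, so Theorem~\ref{upb2} gives $d_{r3}(M(P_n))\leq 4$.

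The lower bound, however, has a genuine gap: the entire content of the proposition is the explicit family, and the candidate scheme you sketch cannot be completed. You propose that every edge of $P_n$ receive $[3]$ in exactly one $f_j$ and $\emptyset$ in the other three, with vertices picking up small nonempty sets where needed. This fails precisely at the boundary difficulty you flag. Consider the pendant edge $e_1=v_1v_2$ and say $f_{j_0}(e_1)=[3]$. Since $v_1$ sees only $e_1$, each of the three functions $f_j$ with $j\neq j_0$ must have $f_j(v_1)\neq\emptyset$; the capacity bound then forces $|f_j(v_1)|=1$ for those three $j$ and $f_{j_0}(v_1)=\emptyset$. Each such $f_j$ must also dominate $e_1$ itself via $f_j(v_1)\cup f_j(v_2)\cup f_j(v_2v_3)=[3]$, and under your scheme $f_j(v_2v_3)\in\{\emptyset,[3]\}$ with at most one of the three $j$'s having $f_j(v_2v_3)=[3]$. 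So for at least two values of $j$ we need $|f_j(v_1)|+|f_j(v_2)|\geq 3$, hence $|f_j(v_2)|\geq 2$ twice, violating $\sum_j|f_j(v_2)|\leq 3$. The obstruction is structural: saturating every edge's capacity with a single $[3]$ leaves no room for the singletons on edges that the construction actually needs. The paper's family avoids this by being asymmetric across the four functions: $f_1$ places $[3]$ on the perfect matching $\{v_{1+2i}v_{2+2i}\}$ only (which exists because $n$ is even, and which contains both pendant edges), while $f_2,f_3,f_4$ assign a cyclically rotated pair of singleton colors to the vertices $v_{1+2i},v_{2+2i}$ and the third color to each non-matching edge $v_{2+2i}v_{3+2i}$. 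Then every element's capacity sums to exactly $3$, and each matching edge is dominated in $f_2,f_3,f_4$ by its two endpoints together with an adjacent non-matching edge. Your "anticipated obstacle" paragraph correctly locates where the work is, but the construction that resolves it is qualitatively different from the one you propose, so the proof as outlined does not go through.
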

\begin{proof}
By Theorem \ref{upb2}, $d_{r3}(M(P_n)) \leq 4$.

Let $P_n = v_1v_2\dotsc v_n$.
Define the $3$-rainbow dominating functions $f_1, f_2, f_3, f_4$ as follows:

$f_1(v_{1+2i}v_{2+2i})=[3]$ for $0 \leq i \leq \frac{n-2}{2}$ and $f_1(x)=\emptyset$ otherwise,

$f_2(v_{1+2i})=\{1\}$, $f_2(v_{2+2i})=\{2\}$ for $0 \leq i \leq \frac{n-2}{2}$, $f_2(v_{2+2i}v_{3+2i})=\{3\}$ for $0 \leq i \leq \frac{n-4}{2}$ and $f_2(x)=\emptyset$ otherwise,

$f_3(v_{1+2i})=\{2\}$, $f_3(v_{2+2i})=\{3\}$ for $0 \leq i \leq \frac{n-2}{2}$, $f_3(v_{2+2i}v_{3+2i})=\{1\}$ for $0 \leq i \leq \frac{n-4}{2}$ and $f_3(x)=\emptyset$ otherwise,

$f_4(v_{1+2i})=\{3\}$, $f_4(v_{2+2i})=\{1\}$ for $0 \leq i \leq \frac{n-2}{2}$, $f_4(v_{2+2i}v_{3+2i})=\{2\}$ for $0 \leq i \leq \frac{n-4}{2}$ and $f_4(x)=\emptyset$ otherwise.

Then clearly $f_i$ is a $3$-rainbow dominating function on $M(P_n)$ for each $i$.
Thus, $\{f_1, f_2, f_3, f_4\}$ is a family of $3$-rainbow dominating functions on $M(P_n)$.

\end{proof}










\begin{prop}\label{dr3:cycle1}
For $n \geq 4$,
$d_{r3}(M(C_n)) = 4$.
\end{prop}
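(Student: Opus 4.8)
The plan is to prove $d_{r3}(M(C_n)) = 4$ by combining an upper bound from Theorem~\ref{upb2} with an explicit family of four $3$-rainbow dominating functions, exactly in the style of Proposition~\ref{dr3:path-even}. For the upper bound, observe that in $M(C_n)$ every original vertex $v_i$ has degree $2$ (it is incident only with the two edges $v_{i-1}v_i$ and $v_iv_{i+1}$ of $C_n$), so $\delta(M(C_n)) = 2$ and Theorem~\ref{upb2} with $k=3$ gives $d_{r3}(M(C_n)) \le \delta(M(C_n)) + 3 \le 5$. This only yields $5$, not $4$, so the upper bound $d_{r3}(M(C_n)) \le 4$ must instead come from Theorem~\ref{upb}: by Proposition~\ref{prop:cycle} we have $\gamma_{r3}^\star(C_n) = \gamma_{r3}(M(C_n)) \ge \frac{4n}{3}$ in all three residue cases, and $M(C_n)$ has order $2n$, so $\gamma_{r3}(M(C_n)) \cdot d_{r3}(M(C_n)) \le 3 \cdot 2n = 6n$ forces $d_{r3}(M(C_n)) \le 6n / (4n/3) = 9/2$, hence $d_{r3}(M(C_n)) \le 4$.

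For the lower bound $d_{r3}(M(C_n)) \ge 4$, I would exhibit four $3$-rainbow dominating functions $f_1, f_2, f_3, f_4$ on $M(C_n)$ whose weights at each element of $V(C_n) \cup E(C_n)$ sum to at most $3$. The natural template, following the path case, is: let $f_1$ put $[3]$ on alternate edges; let $f_2, f_3, f_4$ distribute the three colors $\{1\},\{2\},\{3\}$ cyclically over the vertices and the remaining (uncovered-by-$f_1$) edges so that together with $f_1$ each element receives total weight exactly $3$. Since $M(C_n)$ has the $2n$ elements $v_1, v_1v_2, v_2, v_2v_3, \dots, v_n, v_nv_1$ arranged in a cycle of length $2n$, the construction is cleanest when $2n \equiv 0 \pmod{?}$ matches the period of the pattern; one checks that a pattern of period $4$ on this $2n$-cycle works when $n$ is even, and small modifications (a "defect" block, as in Proposition~\ref{prop:cycle}) handle $n$ odd. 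Concretely I would first write the construction for the two parity classes separately and verify that each $f_i$ is indeed a $3$-rainbow dominating function on $M(C_n)$: every element assigned $\emptyset$ must see all of $\{1,2,3\}$ in its $M(C_n)$-neighborhood, which is routine to check from the periodic color assignment.

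The main obstacle I anticipate is the parity/divisibility bookkeeping for the explicit family. The $2n$-cycle underlying $M(C_n)$ alternates vertex, edge, vertex, edge; a color pattern that simultaneously (a) $3$-rainbow dominates, (b) has per-element weight sum $3$ across the four functions, and (c) closes up consistently around the cycle, will only exist for certain residues of $n$, and for the bad residues one must splice in a short corrective segment and re-verify domination near the splice. This is exactly the kind of case analysis already carried out in Proposition~\ref{prop:cycle} and Proposition~\ref{dr3:path-even}, so the technique is available, but it is the step requiring care. A cleaner alternative worth trying first: since $d_{r3} \ge 3$ always holds, it suffices to produce just \emph{one} additional $3$-rainbow dominating function $f_4$ together with a decomposition of the "all of $[k]$ everywhere" assignment — i.e., find any $3$-rainbow dominating function $f$ on $M(C_n)$ of weight small enough and a companion family; but the most transparent route remains the direct four-function construction, so I would present that, organized by the parity of $n$, and conclude $d_{r3}(M(C_n)) = 4$.
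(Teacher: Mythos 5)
Your proposal follows essentially the same route as the paper: the upper bound comes from Theorem~\ref{upb} combined with $\gamma_{r3}(M(C_n)) \ge \frac{4n}{3}$ from Proposition~\ref{prop:cycle} (and you correctly diagnose that Theorem~\ref{upb2} only yields $5$ here since $\delta(M(C_n))=2$), while the lower bound $d_{r3}(M(C_n)) \ge 4$ is obtained by extending the four functions of Proposition~\ref{dr3:path-even} around the cycle, split by the parity of $n$. The only thing separating your sketch from a complete proof is actually writing out the explicit color assignments — in particular the short corrective block of values on $v_{n-1}v_n$, $v_n$, $v_nv_1$ for odd $n$ — which is precisely what the paper's proof supplies.
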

\begin{proof}
By Theorem \ref{upb}, $\gamma_{r3}(M(C_n)) \cdot d_{r3}(M(C_n)) \leq 3\cdot2n$.
It follows from Proposition \ref{prop:cycle} that $d_{r3}(M(C_n)) \leq 4$.

Let $C_n = v_1v_2\dotsc v_nv_1$.
We consider the following two cases.

\vskip5pt
\textbf{Case 1.} $n$ is even.

Extend the $3$-rainbow dominating functions $f_1, f_2, f_3, f_4$ in Proposition \ref{dr3:path-even} as follows:

$g_1(v_nv_1)=\emptyset$ and $g_1(x) =f_1(x)$ otherwise,

$g_2(v_nv_1)=\{1\}$ and $g_2(x) =f_2(x)$ otherwise,

$g_3(v_nv_1)=\{1\}$ and $g_3(x) =f_3(x)$ otherwise,

$g_4(v_nv_1)=\{1\}$ and $g_4(x) =f_4(x)$ otherwise.

\vskip5pt
\textbf{Case 2.}  $n$ is odd.

Then $n-1 \geq 4$ is even.
Let $f_1, f_2, f_3, f_4$ be the $3$-rainbow dominating functions on $M(P_{n-1})$ given by Proposition \ref{dr3:path-even}.
Extend them as follows:

$g_1(v_{n-1}v_n)=\emptyset$, $g_1(v_n)=\{1\}$, $g_1(v_nv_1) =\emptyset$ and $g_1(x) =f_1(x)$ otherwise,

$g_2(v_{n-1}v_n)=\{3\}$, $g_2(v_n)=\{2\}$, $g_2(v_nv_1) =\emptyset$ and $g_2(x) =f_2(x)$ otherwise,

$g_3(v_{n-1}v_n)=\{1\}$, $g_3(v_n)=\{3\}$, $g_3(v_nv_1) =\emptyset$ and $g_3(x) =f_3(x)$ otherwise,

$g_4(v_{n-1}v_n)=\emptyset$, $g_4(v_n)=\emptyset$, $g_4(v_nv_1) =[3]$ and $g_4(x) =f_4(x)$ otherwise.

\vskip5pt
In any case, clearly $g_i$ is a $3$-rainbow dominating function on $M(C_n)$ for each $i$.
Thus, $\{g_1, g_2, g_3, g_4\}$ is a family of $3$-rainbow dominating functions on $M(C_n)$.
\end{proof}

\bibstyle{plain}

\end{document}